\documentclass[a4paper, 11pt]{article}

\usepackage{theme}

\title{Required-edge Cycle Cover Problem: an ASP-Completeness Framework for Graph Problems and Puzzles}
\author{
  Kosuke Susukita
  \thanks{University of Hyogo, Japan, \mailto{ad25x034@guh.u-hyogo.ac.jp}}
  \quad
  Junichi Teruyama
  \thanks{University of Hyogo, Japan, \mailto{junichi.teruyama@gsis.u-hyogo.ac.jp}}
}
\date{\today}
\usepackage{graphbox}
\usepackage{macros}

\usepackage{xspace}
\newcommand{\vAND}{\textup{\textsc{and}}\xspace}
\newcommand{\vOR}{\textup{\textsc{or}}\xspace}
\newcommand{\vMAJ}{\textup{\textsc{maj}}\xspace}
\newcommand{\pdfscale}{0.75}

\begin{document}

\maketitle

\begin{abstract}
  Proving the NP-completeness of pencil-and-paper puzzles typically relies on reductions from combinatorial problems such as the satisfiability problem (SAT).
  Although the properties of these problems are well studied, their purely combinatorial nature often does not align well with the geometric constraints of puzzles.
  In this paper, we introduce the Required-edge Cycle Cover Problem (RCCP)---a variant of the Cycle Cover Problem (CCP) on mixed graphs.
  CCP on mixed graphs was studied by Seta (2002) to establish the ASP-completeness (i.e., NP-completeness under parsimonious reductions) of the puzzle Kakuro (a.k.a.~Cross Sum), and is known to be ASP-complete under certain conditions.
  We prove the ASP-completeness of RCCP under certain conditions, and strengthen known ASP-completeness results of CCP on mixed graphs as a corollary.
  Using these results, we resolve the ASP-completeness of Constraint Graph Satisfiability (CGS) in a certain case, addressing an open problem posed by the MIT Hardness Group (2024).
  We also show that Kakuro remains ASP-complete even when the available digit set is $\{1, 2, 3\}$, consequently completing its complexity classification regarding the maximum available digit and the maximum lengths of contiguous blank cells.
  It strengthens previously known results of Seta (2002) and Ruepp and Holzer (2010).
  Furthermore, we introduce a flow model equivalent to the constrained RCCP; this model allows gadgets to be tiled densely on a rectangular grid, which enables us to reduce RCCP to various pencil-and-paper puzzles in a parsimonious manner.
  By applying this framework, we prove the ASP-completeness of several puzzles, including Chocona, Four Cells, Hinge, and Shimaguni, and strengthen existing NP-completeness results for Choco Banana and Five Cells to ASP-completeness.
\end{abstract}

\keywords{ASP-completeness, pencil-and-paper puzzles, cycle cover, Constraint Graph Satisfiability, Kakuro, Choco Banana, Five Cells}

\section{Introduction}

In the study of pencil-and-paper puzzles, a central question is whether a given puzzle is NP-complete.
Such NP-completeness results are typically established via reductions from variants of Boolean satisfiability (SAT).
Although puzzles are often defined on rectangular grids and thus subject to strict geometric constraints, the SAT variants used in reductions are usually purely combinatorial.
Even when planarity is considered, these combinatorial constraints are generally less restrictive than the geometric ones arising in actual puzzles.
By contrast, puzzles often exhibit richer inherent structures than SAT, such as area constraints, arithmetic constraints, or shape constraints.
While this richness can make analysis more difficult, it also enables the natural expression of diverse and complex constraints beyond standard Boolean logic.

Furthermore, in pencil-and-paper puzzles, the uniqueness of solutions is a fundamental requirement.
To characterize the computational hardness of verifying uniqueness, the notion of \emph{ASP-completeness} is used.
A problem is ASP-complete if it is NP-complete under parsimonious reductions, that is, reductions that preserve the number of solutions.
ASP-completeness is naturally tied to counting problems: if a problem is ASP-complete, then by definition, its counting version is \#P-complete.
Furthermore, for any fixed $k$, given an instance and $k$ solutions, it is NP-complete to decide whether there exists a distinct solution~\cite{YatoSeta03ASP}.
(We use ``ASP-completeness'' for both the decision and counting versions.)
While many well-known SAT variants are ASP-complete (e.g., 3-SAT~\cite{Valiant1979Permanent, YatoSeta03ASP} and Planar 3-SAT~\cite{Hunt98PlanarCounting}), constructing puzzle gadgets that strictly preserve the number of solutions remains challenging.

General frameworks for establishing ASP-completeness of puzzles have been proposed in prior research.
For instance, the framework based on the Hamiltonian Cycle Problem on grid graphs is introduced in~\cite{Tang22Loop} and is well-suited for grid-based puzzles.
This framework is later refined in~\cite{MITHardness24Hamiltonicity}, where the ASP-completeness of various loop puzzles is established.
However, it relies on global constraints to enforce Hamiltonicity, which limits its applicability to puzzles lacking such global loop structures.

Another framework is the \emph{Constraint Graph Satisfiability} problem (CGS) introduced in~\cite[Section~5.1.3]{HearnDemaine09Puzzles}, which is a variant of the graph orientation problem defined on \emph{constraint graphs}.
In essence, a constraint graph is a graph with edge weights $1$ or $2$, where the vertices are of three types: \vAND, \vOR, and \vMAJ (majority).
While planar CGS with only \vAND and \vOR vertices is known to be NP-complete~\cite[Theorem~5.4]{HearnDemaine09Puzzles}, extending this result to ASP-completeness is nontrivial.
This is because the known reduction relies on a \emph{red-blue conversion} gadget~(\cite[Figure~2.4]{HearnDemaine09Puzzles}) that is not parsimonious, and constructing a parsimonious version is conjectured to be impossible~\cite{MITHardness24GeneralizedSAT}.
To address this distinction, recent work~\cite{MITHardness24GeneralizedSAT} formalized two variants of CGS:
\emph{arbitrary edge weights}, where an edge may have different weights at its two endpoints, and the standard \emph{matching edge weights}, where an edge must have the same weight at both endpoints;
the former is essentially equivalent to a red-blue conversion.
Regarding the restriction of vertex types, the complexity classification for CGS with arbitrary edge weights is now fully settled~\cite{MITHardness24GeneralizedSAT};
in particular, the case with \vAND, \vOR, and \vMAJ vertices is known to be ASP-complete even when the graph is planar.
In contrast, the ASP-completeness of CGS with matching edge weights using \vAND, \vOR, and \vMAJ vertices has remained open.

\subsection{Our Contributions}

In this paper, we propose an ASP-completeness framework designed to exploit the rich structural properties of puzzles, and demonstrate its versatility by applying it to several puzzle types.
Specifically, our framework is based on a variant of the \emph{Cycle Cover Problem} (CCP), which asks whether a given graph can be covered by vertex-disjoint cycles.
CCP on mixed graphs (i.e., graphs with both directed and undirected edges) was originally introduced to analyze the ASP-completeness of Kakuro (Cross Sum) puzzles, and was shown to be ASP-complete even when restricted to planar max-degree-$3$ graphs~\cite{Seta02CrossSum}.
First, we introduce \emph{required edges}---edges that must be included in the cycle cover---into CCP on mixed graphs, and define this problem as the \emph{Required-edge Cycle Cover Problem} (RCCP).
We focus on a subclass of instances on planar degree-$3$ bipartite graphs where each vertex is incident to exactly one required edge; we refer to this problem as \emph{Restricted RCCP}.
We prove that Restricted RCCP is ASP-complete.
Moreover, by constructing a parsimonious gadget simulating a required edge, we obtain a corollary for CCP restricted to planar degree-$3$ bipartite mixed graphs where each vertex is incident to at least one directed edge.
We denote this variant as \emph{Restricted CCP} and conclude that it is also ASP-complete.

Second, we establish the ASP-completeness of several problems via reductions from Restricted RCCP/CCP.
Crucial to these reductions is the exclusive structure of Restricted RCCP/CCP: since every vertex is of degree~$3$, fixing a cycle cover enforces a strict local constraint where each vertex must have exactly one incoming edge, one outgoing edge, and one unused edge.
From the edge perspective, each edge can take one of three states: being traversed in either direction or remaining unused (with directed edges limited to two states).
Since each consistent assignment of these states corresponds to a unique cycle cover, this structure allows us to simulate Restricted RCCP/CCP using simple vertex and edge gadgets.
Furthermore, the bipartiteness of the graphs permits the use of asymmetric edge gadgets and facilitates the geometric embedding of these gadgets into the rectangular grids typical of puzzles.

Applying this strategy, we prove the ASP-completeness of CGS with matching edge weights using \vAND, \vOR, and \vMAJ vertices, thereby resolving the open problem mentioned above~\cite{MITHardness24GeneralizedSAT}.
We also strengthen the known ASP-completeness of Kakuro~\cite{Ruepp10Kakuro,Seta02CrossSum,YatoSeta03ASP}.
Specifically, we show that Kakuro remains ASP-complete even when the available digits are restricted to $\{1, 2, 3\}$.
Combined with previous results, this completes the complexity dichotomy of Kakuro with respect to two fundamental parameters: the maximum available digit and the maximum length of consecutive blank cells.

Finally, to bridge the gap between abstract graphs and grid-based puzzles, we introduce a \emph{flow model} equivalent to Restricted RCCP.
This model consists of two types of vertices: \emph{sources} and \emph{sinks}.
Conceptually, each source supplies a specified amount of flow, and each sink requires a specified amount; these constraints map naturally to puzzle rules involving regional or arithmetic (e.g., sum) constraints.
Specifically, we encode a rectilinear embedding of a Restricted RCCP instance into the flow model by placing sinks at lattice points and inserting a source into each grid segment.
By aligning these vertices on a rectangular grid, we can tile the puzzle board with gadgets in a systematic manner, thereby directly translating the combinatorial hardness of Restricted RCCP into a rigid geometric configuration.
This framework is particularly effective for proving ASP-completeness: since the gadgets densely fill the puzzle board, they leave no empty space, thereby eliminating unintended degrees of freedom and ensuring parsimony.

Using this framework, we establish new ASP-completeness results for several puzzles, including Chocona, Four Cells, Hinge, and Shimaguni.
Moreover, we strengthen existing NP-completeness results for other puzzles, specifically Choco Banana~\cite{Iwamoto2024ChocoBanana} and Five Cells~\cite{Iwamoto2022FiveCells}, to ASP-completeness.
 \section{Preliminaries}

\begin{definition}
  A \emph{mixed graph} $G = (V, E, A)$ is a graph consisting of a set of vertices $V$, a set of undirected edges $E$, and a set of directed edges $A$.
  The \emph{degree} of a vertex $v$ is defined as the number of undirected edges incident to $v$ plus the number of directed edges incident to $v$ (i.e., having $v$ as a head or tail).
\end{definition}

\begin{definition}
  A \emph{max-degree-$k$} graph (resp. \emph{degree-$k$} graph) is a graph in which the degree of every vertex is at most $k$ (resp. exactly $k$).
\end{definition}

A cycle (or mixed cycle) in a mixed graph is defined as an alternating sequence of vertices and edges $v_1 c_1 \dots v_k c_k v_{k+1}$ that satisfies the following conditions:\footnote{Two cycles are considered the same if they differ only in their starting vertex (their orientations must also be the same).}
\begin{itemize}
  \item $v_1 = v_{k+1}$,
  \item all $v_i$ and $c_i$ for $1 \leq i \leq k$ are distinct, and
  \item each $c_i$ is either an undirected edge $\{v_i, v_{i+1}\} \in E$ or a directed edge $(v_i, v_{i+1}) \in A$.
\end{itemize}

Based on these notions, we define the central problem of this paper.

\begin{definition}
  Let $G$ be a graph.
  A \emph{cycle cover} of $G$ is a family of (vertex-disjoint) cycles such that every vertex of $G$ is contained in exactly one cycle.
  The \emph{Cycle Cover Problem} (CCP) asks whether there exists a cycle cover of $G$.
\end{definition}

It is a classical result that CCP on both undirected and directed graphs is solvable in polynomial time (see, e.g.,~\cite[Section~16.4]{BondyMurty2008} and \cite{Huasrsch2025CycleFactors}).
In contrast, CCP on mixed graphs is known to be ASP-complete under certain constraints~\cite{Seta02CrossSum}.

\begin{theorem}[\cite{Seta02CrossSum}]\label{thm:seta_ccp}
  CCP on planar max-degree-3 mixed graphs is ASP-complete.
\end{theorem}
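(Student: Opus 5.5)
To prove Theorem~\ref{thm:seta_ccp}, I would give a parsimonious reduction from Planar 3-SAT, which is ASP-complete~\cite{Hunt98PlanarCounting}. Membership in NP is immediate, since a cycle cover is a polynomial-size certificate that can be checked in linear time. The work lies in building gadgets in a planar mixed graph of maximum degree~3 such that satisfying assignments correspond bijectively to cycle covers.

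\textbf{Wires and variables.}
The basic tool is a directed-edge ``wire'': a path of degree-3 vertices in which each internal vertex also has a side edge. Because every vertex must have exactly one in-edge and one out-edge in the cover, choosing whether a given side edge is used propagates deterministically along the chain. Thus a wire carries exactly one bit, and that bit is encoded by which alternating set of edges the cover uses.

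A variable gadget is a single even cycle, built from undirected edges with directed edges interleaved. It admits exactly two coverings, one per truth value, and these force the bits on the outgoing wires.

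\textbf{Clauses and parsimony.}
A clause gadget receives three wires and must be coverable in exactly one way for each of the seven satisfying input patterns, and in no way for the all-false pattern. The directed edges are what break symmetry here. In an undirected gadget each cycle can be traversed in either orientation, which typically doubles the count. Directed edges fix the orientation of every cycle passing through them, so each cycle should contain at least one directed edge.

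I would design this gadget by a small case analysis (or a computer search) over local configurations. If exactly-one-covering cannot be achieved directly, I would count the coverings for each input pattern and add correction sub-gadgets that equalize the counts.

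\textbf{Planarity.}
Planarity is inherited from the planar embedding of the 3-SAT instance. Crossings between wires are avoided by routing variable cycles along the variable–clause embedding, for example a Lichtenstein-style variable ring.

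\textbf{Main obstacle.}
I expect the hardest step to be parsimony inside the clause gadget while keeping maximum degree~3. Two failure modes must be ruled out:
\begin{itemize}
  \item spurious small cycles, especially undirected $2$-edge back-and-forth traversals, which must be forbidden by requiring length at least~3 or by the edge structure;
  \item multiple covers for the same satisfying pattern, such as when two true literals give interchangeable local routings.
\end{itemize}

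My first attempt would be a clause gadget in which each true literal ``releases'' a vertex, together with a chain of directed edges that the cover absorbs in a canonical, priority-ordered way: the first true literal takes the absorbing path. This priority ordering makes the local covering unique.
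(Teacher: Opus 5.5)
\textbf{Genuine gap: the clause gadget is never built.} Everything depends on a planar, maximum-degree-3 clause gadget with exactly one cycle cover for each of the seven satisfying input patterns and none for the all-false pattern. You do not construct it. ``A small case analysis (or a computer search)'' and the ``priority-ordered absorbing path'' are intentions, not a gadget whose covers have been counted. For example, you would have to check that when only the second and third literals are true, the routing is forced. You would also have to check that no cover contains a cycle without a directed edge, since such a cycle can be reversed to give a second cover. You state this as a desideratum but never enforce it.

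Your fallback does not work either. If sub-gadgets \emph{equalize} the counts at some $k>1$, the number of solutions is multiplied by $k^m$ for $m$ clauses, which is not parsimonious. Only $k=1$ is acceptable. (The undirected back-and-forth $2$-cycles you worry about are already excluded by the definition, since a cycle may not repeat an edge.)

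The wire and variable gadgets are also underspecified. In a cycle cover every degree-3 vertex drops exactly one incident edge. So a path whose internal vertices carry side edges does not propagate a bit deterministically: using $s_i$ only says that exactly one of the two adjacent path edges is used. You would need to say where the side edges go and rule out the other local configurations. Starting from Planar 3-SAT, you must also deliver negated literals to clauses without crossings, which you do not address.

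The paper does not reprove this theorem. It cites Seta and instead proves the stronger \cref{cor:ccp}, by a route that avoids the obstacle you identify. It reduces from Planar Positive 1-in-3-SAT, which is ASP-complete via a parsimonious reduction from Planar 3-SAT. Each clause then has exactly three admissible patterns, so the clause gadget only needs exactly three covers, one per choice of the true literal. The ``two true literals'' multiplicity problem never arises. There are no negations, so a variable gadget is a chain with exactly two covers.

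The reduction first targets Restricted RCCP (\cref{thm:main}), where required edges simplify the gadgets. It then replaces each required edge by a small parsimonious gadget of directed and undirected edges. If you want to keep your overall plan, switching the source problem to a 1-in-3 variant is the missing idea. Otherwise you must actually exhibit the OR-clause gadget and verify its cover counts.
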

 \section{Required-edge Cycle Cover Problem}\label{sec:rccp}

In this section, we introduce a variant of CCP with \emph{required edges}.

\begin{definition}
  Given a mixed graph $G = (V, E, A)$ and a subset of undirected edges $R \subseteq E$, the \emph{Required-edge Cycle Cover Problem} (RCCP) asks whether there exists a cycle cover that contains all edges in $R$.
  Edges in $R$ are called \emph{required edges}.
\end{definition}

In this paper, we mainly consider RCCP on planar degree-3 bipartite graphs in which every vertex is incident to exactly one required edge.
In what follows, we refer to this variant as \emph{Restricted RCCP}.

\subsection{ASP-Completeness of Restricted RCCP}

\begin{theorem}\label{thm:main}
  Restricted RCCP is ASP-complete.
\end{theorem}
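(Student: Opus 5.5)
Membership in NP is immediate, and the natural certificate (the set of edges used, with orientations of the undirected ones) is in bijection with cycle covers. So the real content is a parsimonious reduction from CCP on planar max-degree-3 mixed graphs, which is ASP-complete by Theorem~\ref{thm:seta_ccp}. The plan is to transform such a graph $G$ step by step into a planar degree-3 bipartite mixed graph $G'$ with a required-edge set $R$ meeting every vertex exactly once. Each step must preserve the number of cycle covers, where covers of $G'$ are counted subject to containing $R$.

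\textbf{Step 1 (local normalisation).} Handle vertices of small degree first.
\begin{itemize}
  \item A vertex of degree at most 1 admits no cover; such an instance can be mapped to a fixed instance with zero covers.
  \item A vertex of degree 2 forces both its incident edges to be used. It can be replaced by a required undirected edge between two new vertices, or by a short required path, with the orientation constraints kept through the adjacent edges. Any forced consistency (two incoming directed edges, say) is checked in advance.
  \item The main tool is \emph{subdivision}. Replacing an edge $e=\{u,v\}$ by a path $u$--$a$--$b$--$v$ in which $\{a,b\}$ is required does not give degree~3 at $a$ and $b$. So I will instead design an \emph{edge gadget}: a small planar bipartite degree-3 subgraph with two terminals, whose internal vertices each carry one required edge. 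Its admissible local states must be exactly three: traversed $u\to v$, traversed $v\to u$, or unused, each realised by exactly one internal configuration. A directed-edge gadget has only two states.
\end{itemize}

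\textbf{Step 2 (vertex gadgets).} Each degree-3 vertex $v$ of $G$ must use exactly two of its three edges, one in and one out. I will replace $v$ by a single vertex $v'$ of degree 3 whose three incident edges are the terminal edges of the gadgets. The required edge at $v'$ can then be supplied by the gadget side: each edge gadget ends with a required edge at exactly one of its terminals. This is possible because the gadget is asymmetric, and bipartiteness lets us orient that asymmetry consistently, say the required end at the side-$A$ terminal. This forces each original vertex to have exactly one incident edge that its gadget makes ``required'', which does not match the original freedom. So instead I will keep the original vertices as unconstrained terminals and attach a required pendant structure to each. Concretely, I would split $v$ into two adjacent vertices $v_1,v_2$ joined by a required edge, sharing the three original edges as 1+2. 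Because the required edge must be traversed, $v_1$ and $v_2$ each use exactly one more edge, and this matches ``two of three edges used''. To make all vertices degree 3 and bipartite, I will pad with degree-2 subdivisions realised by small required-edge gadgets.

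\textbf{Step 3 (verification and the main obstacle).} I will check three things: planarity is preserved (all gadgets are planar and replace vertices or edges locally), bipartiteness is arranged by choosing subdivision parities, and the correspondence between covers is bijective. The hardest step is the design and exhaustive case check of the edge gadget. It must be planar, bipartite and cubic, give each internal vertex exactly one required edge, and admit exactly one internal completion for each of the three external states. In particular, the ``unused'' state must still be covered internally by cycles in a unique way, and unintended extra internal cycles must be excluded. I would first look for a gadget built from a 6-cycle with alternating required edges, since such a cycle can cover itself internally in exactly two orientations. That multiplicity is exactly what must be broken, using a directed edge inside the gadget, so that each state has a unique completion. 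I would then verify all cases by enumeration.
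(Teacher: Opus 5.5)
\textbf{There is a genuine gap, and it is in the vertex gadget of Step~2.} That gadget is the heart of any reduction from Seta's CCP.

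Your concrete split does not work. Suppose $v_1$ carries the required edge $\{v_1,v_2\}$ and one original edge $e_1$, while $v_2$ carries the required edge together with $e_2$ and $e_3$. Then $v_1$ has degree $2$, so $e_1$ is used in every cover, and the configuration in which $v$ uses $e_2$ and $e_3$ is lost. The count ``two of three'' matches, but the set of admissible pairs does not. Padding $v_1$ to degree $3$ with a pendant structure does not help, because the added edge is a bridge and can never lie on a cycle.

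More seriously, no gadget of the kind you describe can exist. Replace a degree-3 vertex by a bipartite gadget $X$ attached to the rest of the graph by three single edges (ports) with internal endpoints $a_1,a_2,a_3$.
\begin{itemize}
  \item No port edge can be required. Otherwise it would be used in every cover, and the pair formed by the other two ports would never occur.
  \item Hence every vertex of $X$ is matched by a required edge inside $X$, and $|V(X)|$ is even.
  \item In a cover using ports $i$ and $j$, $X$ is covered by internal cycles plus one path from $a_i$ to $a_j$. The internal cycles have even length by bipartiteness.
  \item So that path has an even number of vertices, and $a_i$ and $a_j$ lie in opposite colour classes.
  \item Two of $a_1,a_2,a_3$ must share a colour, so at least one of the three pairs is never realisable.
\end{itemize}
Therefore a CCP vertex whose three edges may be used in any pair (for instance, three undirected edges) cannot be simulated in Restricted RCCP by a local gadget with single-edge interfaces. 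This holds however the edge gadgets of Step~1 are designed. You also never exhibit the edge gadget itself, but this parity obstruction is the decisive problem with the route.

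The paper avoids the issue by not reducing from CCP at all. It reduces from Planar Positive 1-in-3-SAT, where the choices to encode are only binary per variable and one-of-three per clause. A chain-shaped variable gadget has exactly two covers, and the clause gadget has exactly three. Gadgets are joined by \emph{pairs} of edges whose terminal is a required edge, so the single-edge parity restriction never applies. To rescue your approach you would need bundled (multi-edge) interfaces between vertex and edge gadgets. That is a new idea not present in the proposal.
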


\begin{proof}
  We reduce from \emph{Planar Positive 1-in-3-SAT}.
  In this problem, the incidence graph is planar, every clause consists of three positive literals, and we ask for a truth assignment where exactly one literal per clause is true.
  It is parsimoniously reducible from Planar 3-SAT~\cite{Hunt98PlanarCounting}, which implies that the problem is ASP-complete.
  In \cref{fig:rccp_variable,fig:rccp_clause}, black and white vertices represent the bipartition of the graph, and required edges are drawn as bidirected edges.

  \cref{fig:rccp_variable} illustrates an example of the variable gadget.
  This chain structure can be extended to accommodate any number of connections to clause gadgets, corresponding to the degree of the variable in the formula.
  The gadget admits exactly two valid cycle covers; the former corresponds to assigning the variable true (\cref{fig:rccp_variable_true}), while the latter corresponds to false (\cref{fig:rccp_variable_false}).
  \cref{fig:rccp_clause} illustrates the clause gadget.
  This gadget admits exactly three valid cycle covers, each of which corresponds to an assignment where exactly one literal is true.
  We arrange the variable and clause gadgets according to a planar embedding of the given Planar Positive 1-in-3-SAT instance, connecting them via the dangling edges indicated in the figures.
  This construction introduces no edge crossings, preserving the planarity of the original instance.
  The resulting reduction is parsimonious; thus the theorem holds.
\end{proof}

\begin{figure}[tb]
  \begin{subfigure}{0.32\linewidth}
    \centering
    \includegraphics{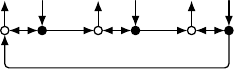}
    \caption{Structure}
  \end{subfigure}
  \hfill
  \begin{subfigure}{0.32\linewidth}
    \centering
    \includegraphics{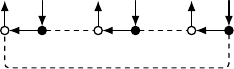}
    \caption{True assignment}\label{fig:rccp_variable_true}
  \end{subfigure}
  \hfill
  \begin{subfigure}{0.32\linewidth}
    \centering
    \includegraphics{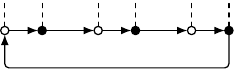}
    \caption{False assignment}\label{fig:rccp_variable_false}
  \end{subfigure}
  \caption{
    An example of a variable gadget and its two valid cycle covers.
    Each pair of open edges indicates connection to a clause gadget.
    While this gadget consists of three connections, it can be extended to any number, corresponding to the degree of the variable in the formula.
  }
  \label{fig:rccp_variable}
\end{figure}

\begin{figure}[tb]
  \begin{subfigure}{\linewidth}
    \centering
    \includegraphics{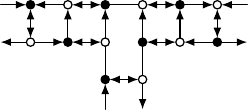}
  \end{subfigure}\vspace{1em}

  \begin{subfigure}{0.32\linewidth}
    \centering
    \includegraphics{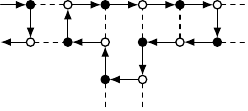}
  \end{subfigure}
  \hfill
  \begin{subfigure}{0.32\linewidth}
    \centering
    \includegraphics{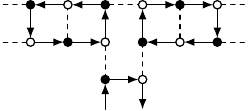}
  \end{subfigure}
  \hfill
  \begin{subfigure}{0.32\linewidth}
    \centering
    \includegraphics{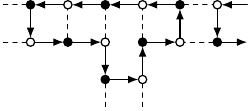}
  \end{subfigure}
  \caption{
    Top row: the clause gadget.
    Bottom row: its three valid cycle covers.
    Three pairs of open edges indicate connections to variable gadgets.
    The terminal of each connection consists of two vertices and a required edge between them.
    In each valid configuration, exactly two terminals of connections are covered, enforcing that exactly one variable is assigned true.
  }
  \label{fig:rccp_clause}
\end{figure}

Furthermore, a required edge can be parsimoniously simulated by the gadget shown in \cref{fig:rccp_required_edge_simulation}, and this substitution preserves bipartiteness.
Since every vertex is incident to at least one directed edge in our construction (after this modification), the following corollary also holds.

\begin{corollary}\label{cor:ccp}
  CCP on planar degree-3 bipartite mixed graphs is ASP-complete even when every vertex is incident to a directed edge.
\end{corollary}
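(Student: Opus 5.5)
Starting from an instance $(G,R)$ of Restricted RCCP, which is ASP-complete by \cref{thm:main}, I would build an instance $G'$ of CCP by replacing every required edge $\{u,v\}\in R$ with a constant-size gadget $H_{uv}$. The gadget is attached to $u$ and $v$ by one edge each, taking the place of the old edge $\{u,v\}$. This keeps the degrees of $u$ and $v$ equal to $3$. The gadget has the following properties.
\begin{enumerate}
\item[(i)] It is planar, degree-3 inside, and bipartite, with its two attachment points on opposite colour classes relative to $u$ and $v$. Concretely, the path through the gadget from $u$ to $v$ has odd length, so the bipartition of $G$ extends to $G'$.
\item[(ii)] Every internal vertex is incident to a directed edge.
\item[(iii)] In any cycle cover of $G'$, both attachment edges are used. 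Together with the gadget they form a single $u$--$v$ path, and both traversal directions occur.
\item[(iv)] For each traversal direction there is exactly one covering of the internal vertices.
\end{enumerate}

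Given such a gadget, the correspondence is direct. A cycle cover $C$ of $G$ containing $R$ traverses each required edge in some direction. Replacing that edge by the unique gadget completion for that direction yields a cycle cover of $G'$. Conversely, (iii) and (iv) let us contract each gadget back to a traversed required edge. Cycles of $C$ lying entirely inside a gadget cannot occur, because (iii) forces all internal vertices onto the through-path. So the map is a bijection, and the reduction is parsimonious.

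Two points remain: the directed-edge condition for the original vertices, and planarity. In Restricted RCCP each original vertex $w$ has exactly one required edge, and after substitution its other two edges are unchanged. So I need either that the Restricted RCCP instances produced in \cref{thm:main} already give every vertex a directed non-required edge, or that the gadget's attachment edges themselves are directed. Directed attachments cannot serve both traversal directions. I would therefore aim for the first option and check it against the construction of \cref{thm:main}, where required edges pair vertices with directed neighbours. Failing that, I would design the gadget so its attachment edges are undirected but the attachment vertices inside the gadget carry directed edges, and handle $u$ by a local modification. Planarity is immediate, since each gadget is planar and is inserted in place of a single edge.

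The main obstacle is designing $H_{uv}$ to satisfy (iii) and (iv) simultaneously. The gadget must force usage, admit both orientations, and have exactly one completion per orientation. The first candidate I would try is two small directed ``ladders'' in opposite directions sharing vertices, whose rungs are undirected. Suppose an entry from $u$ forces a zigzag through one ladder while the other ladder's vertices pair up into forced directed $2$-cycles or short cycles. The difficulty is that mixed cycles of length $2$ are forbidden by distinctness of edges only when they would reuse an undirected edge, so the internal short cycles must be genuine. I would verify by hand that the case where neither attachment edge is used leaves some internal vertex with no valid covering. This case analysis, together with bipartite parity, is where I expect the work to lie.
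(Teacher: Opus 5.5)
Your reduction has the same shape as the paper's: replace each required edge by a constant-size gadget attached by an undirected edge at each end, then transfer covers edge by edge. Your transfer argument is fine once a suitable gadget exists. The genuine gap is that you never produce the gadget. You list (i)--(iv) and call its design ``the main obstacle'', but that gadget is essentially the whole content of the corollary; the paper's proof consists of exhibiting it in Figure~\ref{fig:rccp_required_edge_simulation}.

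Your specification also needs two corrections. First, demanding in (iii) that all internal vertices lie on the through-path is unnecessary, and your own ladder idea with internal $2$-cycles contradicts it. All you need is that every cover uses both attachment edges and that each traversal direction has exactly one completion, internal cycles allowed. Second, ``each gadget is planar'' does not give planarity; the two attachment vertices must share a face. For example, $K_{3,3}$ minus an edge is planar, but inserting it in place of an edge on a cycle creates a $K_{3,3}$ subdivision.

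One gadget that does work is the cube minus one rung. Take the outer cycle $p_1p_2p_3p_4$, the inner cycle $q_1q_2q_3q_4$ and the rungs $p_2q_2$, $p_3q_3$, $p_4q_4$, and attach $u$ to $p_1$ and $v$ to $q_1$. Orient $p_2\to p_1$, $p_1\to p_4$, $p_2\to p_3$, $p_3\to p_4$, $p_4\to q_4$, $q_4\to q_1$, $q_1\to q_2$, $q_3\to q_2$, $q_2\to p_2$, and leave $p_3q_3$, $q_3q_4$ undirected. This gadget is bipartite, has $p_1,q_1$ on a common face, and every vertex has a directed edge. Since $p_2$ has a single in-edge and $p_4$ a single out-edge, leaving $up_1$ unused strands $p_3$; a similar argument applies to $q_1v$. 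A short case check then shows the only completions are:
\begin{itemize}
\item $u\to p_1\to p_4\to q_4\to q_1\to v$ together with the cycle $p_2\to p_3\to q_3\to q_2\to p_2$;
\item $v\to q_1\to q_2\to p_2\to p_1\to u$ together with the cycle $p_3\to p_4\to q_4\to q_3\to p_3$.
\end{itemize}
Note that both completions contain an internal cycle, so your (iii) would have ruled out this gadget.

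The second gap is that you leave open the requirement that every \emph{original} vertex be incident to a directed edge. As you observe, the attachment edges must stay undirected, so this property cannot come from the gadget. It must already hold in the instances you start from, and it fails for general Restricted RCCP instances, which may contain no directed edges at all. You therefore have to reduce from the specific instances produced in the proof of Theorem~\ref{thm:main} and check that each of their vertices has a directed non-required edge. This is exactly what the paper invokes when it says every vertex ``is incident to at least one directed edge in our construction''. Your fallback ``local modification'' at $u$ is unspecified. Substituting gadgets for the edges around $u$ cannot supply the directed edge in general, because those attachments must be undirected too; you would have to replace the vertex itself.
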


This corollary strengthens \cref{thm:seta_ccp} proven in~\cite{Seta02CrossSum}.
In what follows, we refer to this variant of CCP defined in \cref{cor:ccp} as \emph{Restricted CCP}.

\begin{figure}[tb]
  \begin{subfigure}{0.32\linewidth}
    \centering
    \includegraphics{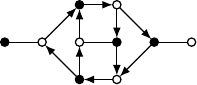}
  \end{subfigure}\hfill
  \begin{subfigure}{0.32\linewidth}
    \centering
    \includegraphics{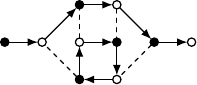}
  \end{subfigure}\hfill
  \begin{subfigure}{0.32\linewidth}
    \centering
    \includegraphics{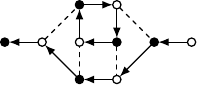}
  \end{subfigure}
  \caption{Simulation of a required edge using undirected and directed edges.}
  \label{fig:rccp_required_edge_simulation}
\end{figure}

\begin{remark}
  It is known that, on general mixed graphs, finding a cycle cover consisting of only even cycles (or containing at least one even cycle) is NP-complete~\cite{Huasrsch2025CycleFactors}.
  Since every cycle in a bipartite graph is even, both problems coincide with the standard CCP in our setting; thus \cref{cor:ccp} strengthens these NP-completeness results to ASP-completeness.
\end{remark}
 \subsection{Flow Model Equivalent to Restricted RCCP}\label{sec:flow_model}

In this section we present a flow model equivalent to Restricted RCCP that facilitates reduction via grid embeddings.
We begin with the definition of a flow network that we will use.

\begin{definition}
  A \emph{flow network} is a directed graph $G = (V, E)$ equipped with a supply/demand function $w \colon V \to \ZZ$.\footnotemark
  Vertices $v$ with $w(v) > 0$ are called \emph{sources} and have supply $w(v)$, while vertices with $w(v) < 0$ are \emph{sinks} and have demand $-w(v)$.
  A \emph{flow} is a function $f \colon E \to \ZZ_{\ge 0}$ satisfying the conservation property
  \[
    \sum_{e \in \delta^+(v)} f(e) - \sum_{e \in \delta^-(v)} f(e) = w(v)
  \]
  for all $v \in V$, where $\delta^-(v)$ and $\delta^+(v)$ denote the sets of incoming and outgoing edges of $v$, respectively.
\end{definition}
\footnotetext{No capacity constraints are imposed.}

In order to construct the model, we assume that each source has an in-degree of $0$, an out-degree of $2$, and a supply of $2$.
Under this assumption, an unconstrained source admits any flow distributions in $\{(0,2), (1,1), (2,0)\}$.
We additionally employ three variants that restrict the admissible flow distributions as follows:
\begin{description}
  \item[Biased] $\{(1,1), (2,0)\}$. Must send at least $1$ unit along the first edge.
    We say such a source is \emph{oriented} toward that edge.
  \item[Exclusive] $\{(0,2), (2,0)\}$. Sends both units to exactly one of the two outgoing edges.
  \item[Fixed] $\{(1,1)\}$. Sends one unit along each outgoing edge.
\end{description}

Formally, we transform an instance of Restricted RCCP into a flow network as follows.
First, given a planar degree-$3$ mixed graph $G$ with required edges, we rectilinearly embed it into a grid (\cref{fig:rectilinear_embedding}).
Let $H$ denote the resulting grid graph containing the embedding (\cref{fig:grid_graph}).
We construct a network $N$ from $H$ based on the following rules (\cref{fig:flow_network}):
\begin{itemize}
  \item Each lattice point of $H$ becomes a sink whose demand equals its degree in $H$.
  \item Each segment (including empty ones) of $H$ is subdivided by inserting a source of supply $2$;
    the two resulting edges are oriented toward the adjacent sinks.
\end{itemize}
The type of each inserted source is determined by the status of the corresponding segment:
segments corresponding to directed edges in $G$ become biased sources oriented consistently with the edge direction;
undirected edges become unconstrained sources;
required edges become exclusive sources;
and empty segments become fixed sources.

\begin{figure}[tb]
  \begin{subfigure}{0.32\textwidth}
    \centering
    \includegraphics{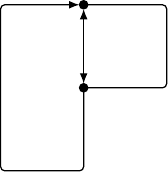}
    \caption{Rectilinear embedding of $G$}\label{fig:rectilinear_embedding}
  \end{subfigure}\hfill
  \begin{subfigure}{0.32\textwidth}
    \centering
    \includegraphics{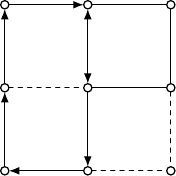}
    \caption{Grid graph $H$}\label{fig:grid_graph}
  \end{subfigure}\hfill
  \begin{subfigure}{0.32\textwidth}
    \centering
    \includegraphics{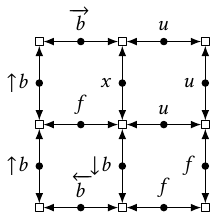}
    \caption{Flow network $N$}\label{fig:flow_network}
  \end{subfigure}
  \caption{
    An example of the flow model construction.
    Labels on the sources of $N$ indicate their types: $u$ for unconstrained, $b$ for biased (toward the arrow), $x$ for exclusive, and $f$ for fixed.
  }
  \label{fig:flow_model_example}
\end{figure}

Now, we associate cycle covers of $G$ with flows in $N$ as follows.
For each vertex $v$ and its corresponding sink, an edge incident to $v$ is incoming, outgoing, or unused in the cycle cover if and only if the corresponding flow value is $2$, $0$, or $1$, respectively.
Under this correspondence, we have the following lemma.

\begin{restatable}{lemma}{lemflowmodel}
\label{lem:flow_model_equivalence}
  Let $G$ be a planar degree-$3$ mixed graph in which every vertex is incident to exactly one required edge, and let $N$ be the flow network constructed as described above.
  Then there is a bijection between the cycle covers of $G$ and the flows in $N$.
\end{restatable}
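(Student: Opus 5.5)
We construct the bijection explicitly, in both directions, and check that the local constraints match on each side. First we fix notation. In the rectilinear embedding, each edge $e$ of $G$ becomes a path in $H$. Its endpoints are the lattice points of the two vertices of $G$, and any interior lattice points are bends. The lattice points of $H$ are of three kinds: vertices of $G$ (degree $3$ in $H$, demand $3$), bend or pass-through points of embedded edges (degree $2$), and the remaining points, which are incident only to empty segments. Every segment carries one source of supply $2$ and sends $a$ units to one endpoint and $2-a$ units to the other. A segment's state is therefore a single value $a\in\{0,1,2\}$, read from a chosen end.

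\textbf{From cycle covers to flows.} Take a cycle cover $C$. We set the flow on every empty segment to $(1,1)$, as the fixed sources require. For an edge $e=\{u,v\}$ of $G$ used in $C$ and traversed from $u$ to $v$, the sink at $v$ must receive $2$ (incoming) and the sink at $u$ must receive $0$ (outgoing). On the path of $e$ we therefore send $0$ toward the $u$-end and $2$ toward the far end of every segment, so each source sends all $2$ units in the direction of travel. For an unused edge we use $(1,1)$ on every segment. We then check conservation:
\begin{itemize}
  \item An interior point of the path receives $2+0$ or $1+1$, which matches its demand $2$.
  \item A point incident only to empty segments receives $1$ per segment, which equals its degree.
  \item A vertex $v$ of $G$ receives $2+0+1=3$, because it has exactly one incoming, one outgoing and one unused edge.
\end{itemize}
Finally we check the source types. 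A directed edge is never traversed against its direction, and it may be unused, so its sources send at least $1$ along the oriented edge, as biased sources require. A required edge is always used, so its sources send $(2,0)$ or $(0,2)$, as exclusive sources require.

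\textbf{From flows to cycle covers.} Take any flow. The key step is \emph{propagation}. Fixed sources force $1$ into every empty segment, so a point of degree $d$ in $H$ that lies on no $G$-edge receives exactly $d$ automatically; this point is inert. At a degree-$2$ interior point of an edge path, if one segment delivers $a$ then the other must deliver $2-a$. This makes the values along the path consistent: the whole path either carries $2$ units in one direction or carries $(1,1)$ everywhere. We must check that the embedding only ever puts path points next to empty segments, and that those segments contribute exactly $1$ each and hence do not disturb the balance. Each edge of $G$ therefore receives a well-defined state: unused, traversed $u\to v$, or traversed $v\to u$. Its endpoint values are $(1,1)$, $(0,2)$ or $(2,0)$ respectively. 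A directed edge cannot be traversed backwards, since its biased source rules that out, and this biased constraint transmits along the path by propagation. A required edge cannot be unused, since its exclusive source rules that out.

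\textbf{Local degree condition.} At a vertex $v$ of $G$ (demand $3$), the three received values lie in $\{0,1,2\}$ and sum to $3$. Exactly one incident edge is required and used, so it contributes $0$ or $2$. The other two edges must then sum to $3$ or $1$. The admissible value sets $\{0,1,2\}$, $\{1,1,1\}$, $\{0,2,1\}$ give the following: if the required edge gives $2$, the others are $\{0,1\}$; if it gives $0$, the others are $\{2,1\}$. In either case $v$ has exactly one incoming, one outgoing and one unused edge. The option $\{1,1,1\}$ would need the required edge unused, which is excluded. This is the one place where the hypothesis ``exactly one required edge per vertex'' is essential. The used edges then form a spanning subgraph in which every vertex has in-degree and out-degree $1$ with consistent orientations, which is exactly a cycle cover respecting the directions and required edges.

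\textbf{Conclusion.} Both maps are determined by the same local rule, flow value $2/0/1 \leftrightarrow$ incoming/outgoing/unused, so they are mutually inverse. I expect the main obstacle to be the propagation step: showing that bends, empty segments, and the sources on intermediate segments introduce no extra freedom, i.e.\ that the flow on every segment is forced once the state of each $G$-edge is fixed.
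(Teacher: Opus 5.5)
Your proposal is correct and follows essentially the same route as the paper. The forward direction applies the $2/0/1$ rule. The converse uses conservation at intermediate lattice points to give each $G$-edge a well-defined state, then uses the exclusive source at each vertex sink to force the multiset $\{0,1,2\}$. Your propagation argument is in fact more explicit than the paper's.

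One correction: $H$ includes empty segments, so a vertex sink can have degree $4$ (demand $4$), not always $3$, and a point on an edge path can have degree $3$ or $4$, not $2$. The paper handles this by treating the degree-$4$ vertex case separately. Subtract the exactly-$1$ contribution of each fixed source, which you already do for path points, and your counts go through unchanged.
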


\begin{proof}
  We first verify the forward direction.
  Fix a cycle cover of $G$, and consider its corresponding cycle family in $H$.
  Each lattice point in $H$ is either traversed by the cycle family or not..
  In the former case, the sources corresponding to the incoming and outgoing edges of the cycle contribute $2$ and $0$ units of flow, respectively; each source on the remaining incident edges contributes $1$ unit.
  In the latter case, all incident edges are unused; thus each source contributes $1$ unit of flow.
  Therefore, the total inflow to the sink equals its degree in $H$ in any case.

  Conversely, we show that a feasible flow in $N$ induces a valid cycle cover of $G$.
  Fix a feasible flow in $N$.
  We construct a subgraph of $G$ and orient its edges based on the flow interpretation defined above.
  Flow conservation at intermediate lattice points ensures that each edge of $G$ has a well-defined orientation.
  Furthermore, due to the biased sources, this orientation is consistent with the direction of the directed edges in $G$.

  The remaining task is to show that this subgraph forms a valid cycle cover.
  Consider a vertex $v$ in $G$ and its corresponding sink $s$ in $N$.
  Assume first that $s$ has degree $3$.
  Since $v$ has degree $3$ and is incident to exactly one required edge, one of the three sources adjacent to $s$ is an exclusive source, which supplies either $0$ or $2$ units of flow.
  Since the flow from each source is at most $2$, the only way that three nonnegative integers, one of which is in $\{0,2\}$, can sum to $3$ is for the multiset of contributions to be $\{0,1,2\}$.
  The case where $s$ has degree $4$ is similar.
  Indeed, exactly three of its adjacent sources correspond to edges incident to $v$, and the remaining one corresponds to the empty grid segment.
  Since the latter is a fixed source, it sends $1$ unit of flow to $s$; thus, the total inflow to $s$ from the former three sources must be $3$.
  Thus, every vertex has exactly one incoming edge (corresponding to flow value $2$) and exactly one outgoing edge (corresponding to flow value $0$) in the constructed subgraph, forming a valid cycle cover.
  The two constructions are inverses of each other.
\end{proof}

We next introduce a restricted version of the flow model, exploiting the bipartiteness.
We employ three variants of sinks, defined as follows:
\begin{description}
  \item[T-Shaped] Degree at least $3$.
    Incident to exactly one exclusive source and two unconstrained or biased sources along orthogonal directions.
    The remaining direction, if any, is occupied by a fixed source.
  \item[L-Shaped] Degree at least $2$.
    Incident to a specified pair of adjacent non-fixed sources: two unconstrained, two exclusive, or two biased (one inward and one outward).
    Any remaining directions are occupied by fixed sources.
  \item[Blank] Incident only to fixed sources.
\end{description}
The T-shaped sinks correspond to vertices in the original graph, while the three types of L-shaped sinks (unconstrained, exclusive, biased) correspond to undirected, required, and directed edges, respectively (including their orientation).
This restricted flow model is sufficient for the Restricted RCCP.

\begin{lemma}\label{lem:zigzag}
  Let $G$ be a planar degree-$3$ bipartite mixed graph in which every vertex is incident to exactly one required edge.
  We can construct a flow network $N$ for $G$ using only T-Shaped, L-shaped, and blank sinks.
\end{lemma}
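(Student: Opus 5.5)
The plan is to start from an arbitrary rectilinear embedding of $G$ and locally redraw it so that every lattice point ends up as a T-shaped, L-shaped, or blank sink. The network $N$ is then obtained by the rules of \cref{sec:flow_model}. Since \cref{lem:flow_model_equivalence} only needs each sink's demand to equal its degree in $H$, and the sink shapes are just a restriction on how the grid graph $H$ looks, it suffices to construct a suitable grid graph $H$. Concretely, $H$ must satisfy three conditions.
\begin{itemize}
\item[(i)] Each vertex $v$ of $G$ sits at a lattice point whose two non-required edges leave in orthogonal directions, and whose required edge leaves in one of the two remaining directions.
\item[(ii)] Each edge of $G$ is drawn as a \emph{zigzag} (staircase) path, with horizontal and vertical unit segments alternating, so that every interior lattice point of the path is a turn. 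Such a point then has exactly two adjacent non-fixed sources, and these are orthogonal.
\item[(iii)] Distinct paths are vertex-disjoint except at common endpoints, so every other lattice point sees only fixed sources and is blank.
\end{itemize}
Along a zigzag path, all inserted sources have the same type as the edge (unconstrained, exclusive, or biased with a consistent orientation). Hence each interior point is an L-shaped sink of the required kind; for a biased path, the point has one inward and one outward biased source, as demanded.

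\textbf{Step 1: initial embedding and blow-up.} Take a planar orthogonal grid embedding of the degree-$3$ graph $G$, which exists for planar max-degree-$4$ graphs and is computable in polynomial time. Scale it by a large constant $c$, so that each unit segment becomes a corridor of width $c$ that is free of other drawing.

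\textbf{Step 2: vertex ports.} Inside the $c\times c$ box around each vertex, reroute the three incident edges so that the two non-required edges exit along orthogonal directions and the required edge exits along a third direction. Only rotations and local detours inside the box are needed.

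\textbf{Step 3: corridors as staircases.} Replace each straight or bent corridor path by a staircase. A horizontal run of length $\ell$ becomes a staircase that alternates right and up/down steps, and two parallel staircases can shift back and forth inside a corridor of constant width.

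\textbf{Step 4: parity, the main obstacle.} A staircase alternates H and V steps, so the direction of its first segment determines the direction of its last segment, depending on the parity of its length. Its net displacement is also constrained: using $|dx|$ horizontal and $|dy|$ vertical steps forces $|dx|-|dy|\in\{-1,0,1\}$ up to detours. Detours only add steps in pairs, so they cannot change these constraints. My plan is to use the bipartition to fix the parity once and for all.
\begin{itemize}
\item Place black vertices at lattice points with $x+y$ even and white vertices at points with $x+y$ odd, after rescaling and perturbing by one unit inside each vertex box.
\item Choose port conventions by colour: at a black vertex every edge leaves through a horizontal segment, and at a white vertex every edge arrives through a vertical segment, or the suitable analogue that is compatible with the T-shape in condition (i).
\end{itemize}
Every edge joins a black and a white vertex, so every edge path must start horizontally, end vertically, and connect points of opposite parity. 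This is exactly an odd-length alternating staircase, and the conditions are consistent with each other. Without bipartiteness, an odd cycle would force some edge to have mismatched end directions, and this is where the hypothesis is used.

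\textbf{Step 5: fitting the ports to the T-shape.} Condition (i) requires the two non-required ports to be orthogonal, which clashes with a purely horizontal port convention. I would first try to resolve this by allowing one turn inside the vertex box, absorbed into the first segment of the staircase. If needed, I would instead make the convention ``the required edge is horizontal at black vertices and vertical at white vertices'' and check the parities case by case. This bookkeeping is the step I expect to require the most care.

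Once the parities are settled, each staircase can be routed inside its corridor with constant slack, and all remaining lattice points are blank. The construction is polynomial and preserves planarity.
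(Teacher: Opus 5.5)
Your outline (orthogonal embedding, blow-up, every edge drawn as a zigzag, all other lattice points blank) is the same as the paper's. The gap is the step that actually carries the lemma, the parity bookkeeping you defer to Steps~4--5. It is never resolved, and what you do assert there is wrong:
\begin{itemize}
\item An alternating path that starts horizontally and ends vertically has even length, so it joins lattice points of the \emph{same} parity. An odd-length staircase that starts horizontally also ends horizontally.
\item ``Every edge leaves a black vertex horizontally'' cannot hold at a degree-$3$ vertex, which has only two horizontal directions.
\item Your fallback ``required edge horizontal at black vertices, vertical at white ones'' is inconsistent with your own placement of black vertices at even points and white vertices at odd points.
\item A turn inside the vertex box changes nothing, since it is just part of the zigzag.
\end{itemize}

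The missing idea is an invariant. Walk along a zigzag away from one endpoint and track $x+y+\epsilon \pmod 2$, where $\epsilon=1$ if the next segment is vertical and $\epsilon=0$ otherwise. This quantity is constant along the path. Hence the values read at the two endpoints, each walking away from itself, always differ. At a vertex the three used directions form a T, so the two collinear edges share a value and the stem gets the other. The paper proceeds as follows:
\begin{itemize}
\item It reads every edge from its $V_1$-endpoint; this is where bipartiteness enters.
\item It assigns one value to all required edges and the other to all non-required edges. This is consistent precisely because each vertex has exactly one required edge.
\item It orients each T according to the parity of its lattice point, and then realizes the zigzags with $2\times 2$ units.
\end{itemize}
With your placement (black at even, white at odd), the invariant just says that every edge leaves both endpoints along the same axis. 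So ``required edge horizontal and non-required edges vertical at every vertex'' would have worked.

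Separately, your condition~(i) misreads the T-shaped sink. In the paper the exclusive source is the stem: the required edge is orthogonal to \emph{both} non-required edges, which are collinear. In your~(i) the two non-required edges are orthogonal to each other, so the required edge is collinear with one of them. That is not a T-shaped sink in the paper's sense. Such a configuration could also be made parity-consistent, for instance by alternating the stem role along the even cycles formed by the non-required edges. But you do not argue this, and it is not the sink type the lemma is about.
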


\begin{proof}
  We first show that we can rectilinearly embed $G$ so that each edge bends at every lattice point it passes through.
  In particular, we ensure that at each vertex, the required edge is orthogonal to the two remaining incident edges.

  First, observe the necessary parity condition.
  Let $V_1 \cup V_2$ be the bipartition of the vertices of $G$.
  Assuming that we obtain a desired embedding, consider traversing an edge from a vertex in $V_1$ to a vertex in $V_2$.
  Suppose that the edge starts with a horizontal segment.
  Since horizontal and vertical segments appear alternatingly, the $i$-th segment must be horizontal for odd $i$ and vertical for even $i$; if it starts with a vertical segment, the converse holds.
  Considering the coordinates of the lattice points, an edge starting horizontally from an odd (resp. even) lattice point behaves consistently with an edge starting vertically from an even (resp. odd) lattice point, in the sense that they traverse lattice points of the same parity in the same direction.

  Now, we construct such an embedding.
  Since each vertex has degree $3$, it forms a T-shape in the rectilinear embedding.
  Suppose that we place a vertex at an arbitrary lattice point.
  We align the incident segment corresponding to the required edge along one direction, and the segments corresponding to the other incident edges along the orthogonal directions.
  By choosing the orientation of each T-shape according to the coordinate parity, we can ensure that all required edges share one parity and all non-required edges share the other.
  Given this parity assignment, the actual embedding can be readily constructed by replacing each lattice point with a unit of $2 \times 2$ lattice points as in \cref{fig:zigzag_embedding}.

  \begin{figure}
    \begin{subfigure}{0.32\linewidth}
      \centering
      \includegraphics{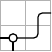}
      \hspace{1em}
      \includegraphics{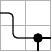}
      \caption{Vertices}
    \end{subfigure}
    \hfill
    \begin{subfigure}{0.32\linewidth}
      \centering
      \includegraphics{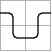}
      \caption{Straight segment}
    \end{subfigure}
    \hfill
    \begin{subfigure}{0.32\linewidth}
      \centering
      \includegraphics{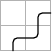}
      \hspace{1em}
      \includegraphics{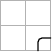}
      \caption{Bend segments}
    \end{subfigure}
    \caption{
      $2 \times 2$ units for the embedding in \cref{lem:zigzag}.
      Edge directions are omitted for simplicity.
    }
    \label{fig:zigzag_embedding}
  \end{figure}

  Finally, using this embedding, we construct the flow network $N$.
  Each vertex becomes a T-shaped sink, each intermediate point on an edge becomes an L-shaped sink of the corresponding type, and each empty lattice point becomes a blank sink.
\end{proof}

In our construction of flow networks, the total supply (twice the number of edges of $H$) equals the total demand (the sum of the degrees of vertices in $H$).
This leads directly to the following lemma.
\begin{lemma}\label{lem:relaxed_flow_model}
  \cref{lem:flow_model_equivalence} holds even if the conservation property is relaxed to inequalities (i.e., $\ge w(v)$ for all $v$ or $\le w(v)$ for all $v$).
\end{lemma}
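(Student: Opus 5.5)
The plan is a counting (double-counting) argument: a relaxed flow has the same total budget as an exact one, so every inequality is forced to be tight, and \cref{lem:flow_model_equivalence} then applies unchanged.

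First I record the two totals. Every source has in-degree $0$, out-degree $2$ and supply $2$. In our network every source lies on exactly one segment of $H$ and every segment carries exactly one source, so the total supply is $2|E(H)|$. Every sink is a lattice point of $H$ whose demand is its degree in $H$, so the total demand is $\sum_{v} \deg_H(v) = 2|E(H)|$. Also, every edge of $N$ goes from a source to a sink. Hence, for any function $f$ on the edges,
\[
  \sum_{v \in V(N)} \Bigl( \sum_{e \in \delta^+(v)} f(e) - \sum_{e \in \delta^-(v)} f(e) \Bigr) = 0 = \sum_{v} w(v),
\]
since each edge is counted once positively at its tail and once negatively at its head.

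Next I treat the case in which the relaxed constraint is ``$\ge w(v)$ for all $v$''. Summing the inequality over all vertices gives $0 \ge 0$. Because each term is nonnegative and the terms sum to zero, every term must vanish, so each inequality holds with equality. The case ``$\le w(v)$ for all $v$'' is symmetric: each difference is nonpositive and the differences sum to zero.

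A relaxed flow is therefore an exact flow, and the converse is immediate. The set of relaxed flows thus coincides with the set of flows. The bijection of \cref{lem:flow_model_equivalence} carries over verbatim, and the source-type restrictions (biased, exclusive, fixed) are unaffected, since they constrain individual sources rather than conservation.

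The only step that needs care is confirming that the global identity holds exactly. Concretely, this means every segment of $H$, including the empty ones that become fixed sources, contributes supply $2$, and every sink's demand counts all incident segments, including the empty ones. I would check this directly against the construction rules. Once that holds, the rest is routine.
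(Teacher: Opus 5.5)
Your proposal is correct and follows the same route as the paper, which simply observes that the total supply $2|E(H)|$ equals the total demand $\sum_v \deg_H(v)$, so summing the uniformly relaxed inequalities forces every one of them to be tight and the bijection of the exact model carries over. Your write-up just makes this slack-summation step explicit.
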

 \section{ASP-Completeness of Constraint Graph Satisfiability}\label{sec:cgs}

\newcommand{\localpdfscale}{1}

In this section, we formally define the CGS problem and prove its ASP-completeness.

\begin{definition}[\cite{HearnDemaine09Puzzles}]
  A \emph{constraint graph} is a graph with vertex weights and edge weights.
  Each edge can be oriented in either direction.
  A vertex is \emph{satisfied} if and only if the sum of the weights of its incoming edges is at least the vertex weight.
  Given a constraint graph, the \emph{Constraint Graph Satisfiability} problem (CGS) asks whether there exists an orientation of the edges that satisfies all vertices.
\end{definition}

CGS for planar graphs is called \emph{Planar} CGS.
Typically, all vertex weights are restricted to $2$, and all edge weights are restricted to either $1$ or $2$.
Each edge is colored according to its weight: weight-$1$ edges are \emph{red}, and weight-$2$ edges are \emph{blue}.

We employ three standard types of vertices commonly used in constraint logic frameworks~\cite{HearnDemaine09Puzzles}:
\begin{description}
  \item[and] Incident to one blue edge and two red edges.
    The blue edge can be oriented outward only if both red edges are oriented inward.
  \item[or] Incident to three blue edges.
    At least one of the three edges must be oriented inward.
  \item[maj] Incident to three red edges.
    At least two of the three edges must be oriented inward.
\end{description}

\begin{figure}
  \centering
  \begin{tabular}{r c ccc ccc c}
    \vAND: & \includegraphics[scale=\localpdfscale,align=c]{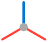}
           & \includegraphics[scale=\localpdfscale,align=c]{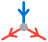}
           &
           &
           & \includegraphics[scale=\localpdfscale,align=c]{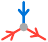}
           & \includegraphics[scale=\localpdfscale,align=c]{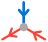}
           & \includegraphics[scale=\localpdfscale,align=c]{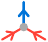}
           & \includegraphics[scale=\localpdfscale,align=c]{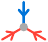} \\[1em]

    \vOR:  & \includegraphics[scale=\localpdfscale,align=c]{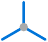}
           & \includegraphics[scale=\localpdfscale,align=c]{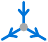}
           & \includegraphics[scale=\localpdfscale,align=c]{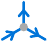}
           & \includegraphics[scale=\localpdfscale,align=c]{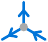}
           & \includegraphics[scale=\localpdfscale,align=c]{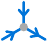}
           & \includegraphics[scale=\localpdfscale,align=c]{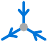}
           & \includegraphics[scale=\localpdfscale,align=c]{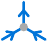}
           & \includegraphics[scale=\localpdfscale,align=c]{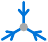} \\[1em]

    \vMAJ: & \includegraphics[scale=\localpdfscale,align=c]{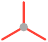}
           &
           &
           &
           & \includegraphics[scale=\localpdfscale,align=c]{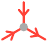}
           & \includegraphics[scale=\localpdfscale,align=c]{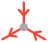}
           & \includegraphics[scale=\localpdfscale,align=c]{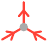}
           & \includegraphics[scale=\localpdfscale,align=c]{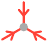} \\
  \end{tabular}
  \caption{Three types of vertices in constraint graphs and all of their valid orientations.}
  \label{fig:cgs_vertices}
\end{figure}

Furthermore, edges are often allowed to have different weights at their endpoints; that is, an edge can have weight $2$ at one endpoint and weight $1$ at the other.
Problems containing such edges are referred to as having \emph{arbitrary edge weights}, while the standard case is referred to as having \emph{matching edge weights}~\cite{MITHardness24GeneralizedSAT}.
In this work, we focus on the latter.

\begin{theorem}
  Planar matching edge weights CGS with \vAND, \vOR, and \vMAJ vertices is ASP-complete.
\end{theorem}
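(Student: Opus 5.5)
We reduce from Restricted RCCP (\cref{thm:main}), or more conveniently from the flow model of \cref{sec:flow_model}, using \cref{lem:flow_model_equivalence} together with the relaxed form of \cref{lem:relaxed_flow_model}. Membership in NP is clear, and since \cref{lem:flow_model_equivalence} is a bijection, it suffices to give a parsimonious simulation of the flow network $N$ built from a planar degree-$3$ bipartite instance. The guiding idea is to encode each unit of flow on a flow-network edge by the orientation of constraint-graph edges, so that relaxed conservation ``inflow $\ge$ demand'' at each sink becomes the CGS condition ``incoming weight $\ge 2$''.

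\textbf{Encoding edges.} Each edge of $G$ has three states: traversed in either direction, or unused. The plan is to represent each edge of $G$, or each L-shaped segment of the embedding from \cref{lem:zigzag}, by a pair of red edges or by a blue edge together with a red edge.

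\textbf{Vertex gadget.} A vertex of $G$ needs exactly one incoming, one outgoing and one unused edge. Locally, the exclusive (required) edge contributes $0$ or $2$ and the other two contribute $0$, $1$ or $2$, with total exactly $3$. This is a threshold constraint. Since total supply equals total demand, \cref{lem:relaxed_flow_model} lets us impose only the $\ge$ side locally, and the global count then forces equality everywhere. This is exactly what makes CGS's monotone ``at least'' semantics usable.

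\textbf{Gadget construction.} First I would build small planar constraint-graph gadgets realizing each sink and source type:
\begin{itemize}
  \item an exclusive source as a single blue edge, whose two orientations give $(2,0)$ or $(0,2)$;
  \item an unconstrained source as two red edges in parallel paths, routed through \vMAJ or \vAND vertices so that the pair gives values $0$, $1$, $2$;
  \item a fixed source as one red edge pointing to each side;
  \item a biased source as a red edge forced in one direction by a dangling \vOR/\vMAJ arrangement.
\end{itemize}
Then a T-shaped sink would be an \vAND-like gadget: it receives the blue (exclusive) edge plus two red-weighted edges and must collect weight at least $3$. Weight $3$ is not directly available, so I would decompose it into two CGS vertices sharing an internal edge, for example an \vAND whose blue side feeds an \vOR, and check the satisfying orientations exhaustively.

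\textbf{Parsimony and the main obstacle.} Each gadget must have exactly one internal orientation per admissible boundary state; any internal edge whose orientation is free in some state breaks parsimony. This is the crux, since the red-blue conversion issue is precisely such slack. The planned fix is again global tightness. Count total vertex demand ($2$ per vertex) against total edge weight, and design the gadgets so that these are equal. Then every vertex is satisfied with exactly weight $2$, no vertex has slack, and internal orientations become forced. The hardest step is finding concrete \vAND/\vOR/\vMAJ gadgets with matching weights that are globally tight and planar; I would search over small gadgets, bounded to roughly $10$ vertices, by computer enumeration of orientations. With bipartiteness allowing asymmetric edge gadgets oriented from $V_1$ to $V_2$, I expect this to be achievable with the \vMAJ vertex playing the role of the red ``$\ge 2$ of 3'' counter.
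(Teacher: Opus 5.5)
\paragraph{There is a genuine gap.} Your proposal is a research plan, not yet a proof. The whole substance of the theorem is parsimonious gadgets with \emph{matching} edge weights, which is exactly what was open, and you defer it to a hypothetical computer search. The components you do sketch fail as stated:
\begin{itemize}
\item An unconstrained source made of two red edges (or two red paths) between the two sink gadgets realizes the state $(1,1)$ in two ways, since either edge may point toward either side. It is therefore $2$-to-$1$ on that state unless some further, unspecified structure breaks the symmetry.
\item Your T-sink candidate, an \vAND whose blue edge feeds an \vOR, exposes two red and two blue ports. Your encoding, however, delivers one blue and four red edge-ends to a T-sink.
\item That \vAND/\vOR composition also absorbs only $2$ units of external weight, not the $3$ you need: its two vertices demand $4$ in total and the internal blue edge supplies $2$.
\item Fixed and biased sources need red edges forced in one direction. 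That requires spare weight from extra structure (a \vMAJ can afford at most one outgoing red edge), and you do not show how this is balanced or made unique.
\end{itemize}

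\paragraph{Your parsimony mechanism does not work.} Global tightness (for a closed constraint graph, exactly that the number of \vMAJ vertices is twice the number of \vOR vertices) only guarantees that every vertex receives incoming weight exactly $2$. It does not force internal orientations. Reversing any consistently directed cycle whose edges all have one color leaves every vertex's incoming weight unchanged. So uniqueness must be checked gadget by gadget; it does not follow from counting.

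\paragraph{How the paper does it.} The paper avoids unary flow encodings and reduces directly from Restricted CCP with a one-hot encoding.
\begin{itemize}
\item Each terminal is a triple of parallel edges labeled $I,O,U$.
\item A vertex gadget is three \vOR vertices for one side of the bipartition and three \vMAJ vertices for the other.
\item Edge gadgets are three parallel paths pairing $I$ with $O$, $O$ with $I$, and $U$ with $U$, built with simulated weight-$1$ vertices; directed edge gadgets forbid one pairing.
\item A local count forces one selected edge per terminal and a permutation of $I,O,U$ at every vertex. Each edge gadget can send at most one blue edge into a blue gadget, which needs three in total. Each terminal can send at most two red edges into a red gadget, which needs six. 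This count is what pins every orientation down.
\item Planarity is then restored with explicit red--red, red--blue and blue--blue crossover gadgets. Your grid-based route would have avoided these, had its gadgets existed.
\end{itemize}
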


\begin{proof}
  We construct a reduction from Restricted CCP.
  We first describe the construction without considering planarity.
  The core idea of the reduction is as follows:
  Each vertex of the CCP instance is colored either blue or red according to its bipartition; this coloring corresponds to the type of the associated vertex gadget.
  A vertex gadget has three terminals, corresponding to its three incident edges.
  Each terminal admits three possible states, denoted by $I$, $O$, and $U$, representing an incoming edge, an outgoing edge, and an unused edge in a cycle cover, respectively.
  Each vertex gadget admits exactly those assignments of terminal states in which each of $I$, $O$, and $U$ appears exactly once.
  An edge gadget connects a terminal of a blue vertex gadget to a terminal of a red vertex gadget.
  It represents a single wire that transmits the state between the two endpoints such that $I$ and $O$ are swapped, while $U$ remains unchanged.
  An undirected edge gadget admits all three possible states, whereas a directed edge gadget admits only two; it forbids the state corresponding to traversal against the original edge's direction (i.e., $I$ at the tail terminal and $O$ at the head terminal).
  Under this construction, any valid assignment of states to the gadgets corresponds one-to-one to a cycle cover of the original instance.

  We now describe the gadgets in detail.
  The vertex gadgets are illustrated in \cref{fig:cgs_vertex}.
  Each terminal of a vertex gadget consists of three parallel edges labeled $I$, $O$, and $U$, representing the three possible states.
  As will be demonstrated, in a terminal of a blue (resp., red) vertex gadget, exactly one of these three edges must be oriented inward (resp., outward); the label of this edge determines the state of the terminal.
  With this convention, it is straightforward to verify that both types of vertex gadgets admit exactly those assignments of states described above.

  The edge gadgets are illustrated in \cref{fig:cgs_edge}.
  Each gadget consists of three parallel paths connecting a blue vertex gadget to a red vertex gadget.
  Each end of these paths is labeled $I$, $O$, and $U$ according to the corresponding terminal of the vertex gadget.
  These paths connect the corresponding labels as follows: $I$ to $O$, $O$ to $I$, and $U$ to $U$.
  As we show below, exactly one of these three paths must be oriented from red to blue; the pair of labels of this path determines the state of the gadget.
  It can be verified that the undirected edge gadget (\cref{fig:cgs_edge_undirected}) admits all three possible states, whereas the directed edge gadget (\cref{fig:cgs_edge_directed_left,fig:cgs_edge_directed_right}) forbids the specified state.

  \begin{figure}[tb]
    \begin{subfigure}{0.48\linewidth}
      \centering
      \includegraphics[scale=\localpdfscale]{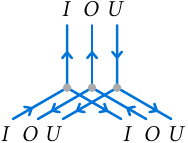}
      \caption{Blue}\label{fig:cgs_vertex_blue}
    \end{subfigure}\hfill
    \begin{subfigure}{0.48\linewidth}
      \centering
      \includegraphics[scale=\localpdfscale]{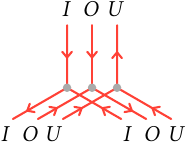}
      \caption{Red}\label{fig:cgs_vertex_red}
    \end{subfigure}
    \caption{
      Two types of vertex gadgets for CGS and an example of a valid orientation of their edges.
      The gadgets have three terminals, and each terminal consists of three edges labeled $I, O, U$.
      In each terminal, the single edge oriented differently from the other two determines the state of the terminal: $(U, O, I)$ in (a) and (b), as viewed clockwise from the top.
    }
    \label{fig:cgs_vertex}
  \end{figure}

  \begin{figure}[tb]
    \begin{subfigure}{0.32\linewidth}
      \centering
      \includegraphics[scale=\localpdfscale]{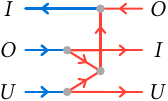}
      \caption{Undirected}\label{fig:cgs_edge_undirected}
    \end{subfigure}\hfill
    \begin{subfigure}{0.32\linewidth}
      \centering
      \includegraphics[scale=\localpdfscale]{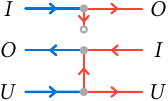}
      \caption{Directed (leftward)}\label{fig:cgs_edge_directed_left}
    \end{subfigure}\hfill
    \begin{subfigure}{0.32\linewidth}
      \centering
      \includegraphics[scale=\localpdfscale]{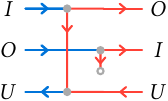}
      \caption{Directed (rightward)}\label{fig:cgs_edge_directed_right}
    \end{subfigure}
    \caption{
      Three types of edge gadgets for CGS and an example of a valid orientation of their internal edges.
      The hollow vertices indicate weight-1 vertices (\cref{fig:cgs_weight1}).
      Exactly one of the three parallel paths oriented leftward determines the state of the gadget: $(I, O)$ in (a), $(O, I)$ in (b), and $(U, U)$ in (c), respectively.
    }
    \label{fig:cgs_edge}
  \end{figure}

  \begin{figure}[tb]
    \centering
    \includegraphics[scale=\localpdfscale]{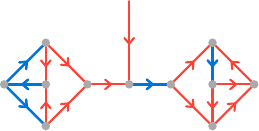}
    \caption{Simulation of a weight-1 vertex using \vAND, \vOR, and \vMAJ vertices, showing unique valid orientations.}
    \label{fig:cgs_weight1}
  \end{figure}

  First, we show the validity of the blue vertex gadget.
  By construction, each edge gadget can orient at most one blue edge in the constraint graph outward.
  Summing over the three incident edge gadgets, at most three blue edges can be oriented toward a blue vertex gadget.
  Conversely, as a blue vertex gadget consists of three \textsc{or} vertices, it requires at least three incoming blue edges.
  Therefore, in each terminal, exactly one blue edge is oriented inward, satisfying the desired property.

  Next, we show the validity of the edge gadgets and the red vertex gadget.
  In an edge gadget, exactly one blue edge is oriented outward; thus, at least one of the three paths must be oriented from red to blue.
  From the perspective of the incident red vertex gadget, in each terminal, at most two red edges can be oriented inward.
  Summing over the three incident edge gadgets, at most six red edges can be oriented inward.
  Conversely, as a red vertex gadget consists of three \textsc{maj} vertices, it requires at least six incoming red edges.
  Therefore, similarly, in each terminal exactly two red edges are oriented from an edge gadget toward the red vertex gadget (and exactly one is oriented toward the edge gadget); this satisfies the desired property.

  Finally, we address planarity.
  The constructions of the crossing gadgets are illustrated in \cref{fig:cgs_crossing}.
  In a red-red crossing gadget (\cref{fig:cgs_crossing_red_red}), no valid orientation exists if two opposite exterior edges are oriented outward.
  Conversely, if two adjacent exterior edges are oriented outward, the orientations of the interior edges are uniquely determined, forcing the remaining two exterior edges to be oriented inward.
  Hence, the orientations of the crossing red edges are consistent.
  The red-blue crossing gadget (\cref{fig:cgs_crossing_red_blue}) is constructed from two red-red crossing gadgets, while the blue-blue crossing gadget (\cref{fig:cgs_crossing_blue_blue}) consists of two red-blue crossing gadgets.
  It is straightforward to verify that these crossing gadgets function as intended.\qedhere

  \begin{figure}[tb]
    \begin{subfigure}{0.32\linewidth}
      \centering
      \includegraphics[scale=\localpdfscale]{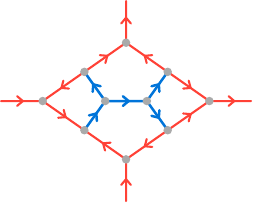}
      \caption{Red-red}\label{fig:cgs_crossing_red_red}
    \end{subfigure}\hfill
    \begin{subfigure}{0.32\linewidth}
      \centering
      \includegraphics[scale=\localpdfscale]{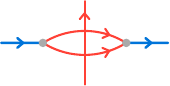}
      \caption{Red-blue}\label{fig:cgs_crossing_red_blue}
    \end{subfigure}\hfill
    \begin{subfigure}{0.32\linewidth}
      \centering
      \includegraphics[scale=\localpdfscale]{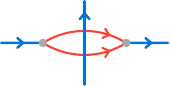}
      \caption{Blue-blue}\label{fig:cgs_crossing_blue_blue}
    \end{subfigure}
    \caption{
      Three types of crossing gadgets for CGS used to ensure planarity.
    }
    \label{fig:cgs_crossing}
  \end{figure}
\end{proof}
 \section{Applications to Puzzles}

In this section, we apply the results in \cref{sec:rccp} to various pencil-and-paper puzzles.

\subsection{Kakuro}

\emph{Kakuro}~\cite{KakuroRule}, also known as \emph{Cross Sum}, is a well-known puzzle.
The rules of Kakuro are as follows:
\begin{enumerate}
  \item Given a grid, assign digits from $1$ to $9$ to each blank cell.
  \item For each horizontal or vertical run of contiguous blank cells, the sum of the digits in the run must equal the given clue.
  \item No digit may be repeated within a single run.
\end{enumerate}
We implicitly assume that every run contains at least two blank cells and that all blank cells are orthogonally connected.

We consider a generalized version of Kakuro introduced by Seta~\cite{Seta02CrossSum}.
\emph{$(N, L_\text{min}, L_\text{max})$-Kakuro} is a variant of Kakuro with the following additional constraints:
\begin{itemize}
  \item The digits allowed in each cell are restricted to $\{1, \dots, N\}$.
  \item Each run of contiguous blank cells has a length of between $L_\text{min}$ and $L_\text{max}$.
\end{itemize}
By definition, we assume that $N \ge L_\text{max} \ge L_\text{min}$.
Under this definition, standard Kakuro is equivalent to $(9, 2, 9)$-Kakuro.

Observe that for any $L'_\text{min} \le L_\text{min} \le L_\text{max} \le L'_\text{max}$, every instance of $(N, L_\text{min}, L_\text{max})$-Kakuro is also a valid instance of $(N, L'_\text{min}, L'_\text{max})$-Kakuro.
Hence, $(N, L_\text{min}, L_\text{max})$-Kakuro is trivially reducible to $(N, L'_\text{min}, L'_\text{max})$-Kakuro; in particular, it is parsimonious.
It follows that, for fixed $N$, the hardness of $(N, L_\text{min}, L_\text{max})$-Kakuro (or its counting version) implies the hardness of $(N, L'_\text{min}, L'_\text{max})$-Kakuro for all $L'_\text{min} \le L_\text{min}$ and $L'_\text{max} \ge L_\text{max}$.
Similarly, the tractability of $(N, L'_\text{min}, L'_\text{max})$-Kakuro (or its counting version) implies the tractability of $(N, L_\text{min}, L_\text{max})$-Kakuro.

In contrast, for fixed $L_\text{min}$ and $L_\text{max}$, the complexity of $(N, L_\text{min}, L_\text{max})$-Kakuro may vary with $N$.
This is because increasing $N$ enlarges the set of available digits, thereby increasing the number of potential assignments to blank cells.
Thus, although $(N, L_\text{min}, L_\text{max})$-Kakuro is trivially reducible to $(N', L_\text{min}, L_\text{max})$-Kakuro for $N \le N'$, the reduction is not parsimonious in general.

The known complexity results for Kakuro are summarized below:
\begin{itemize}
  \item $(N, 2, 2)$-Kakuro is in P for all $N$~\cite{Seta02CrossSum}.
  \item $(N, 2, 5)$-Kakuro is ASP-complete for all $N \ge 7$~\cite{Seta02CrossSum,YatoSeta03ASP}.
  \item $(N, 1, 3)$-Kakuro is ASP-complete for all $N \ge 7$~\cite{Seta02CrossSum}.
  \item $(9, 2, 4)$-Kakuro is ASP-complete~\cite{Ruepp10Kakuro}.
\end{itemize}

Our main result in this section is the following.

\begin{theorem}\label{thm:kakuro}
  $(N, 2, 3)$-Kakuro is ASP-complete for all $N \ge 3$.
\end{theorem}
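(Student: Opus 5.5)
Membership in NP, together with the existence of a parsimonious verifier, is immediate. For hardness I reduce parsimoniously from Restricted CCP (\cref{cor:ccp}), or equivalently Restricted RCCP (\cref{thm:main}). I follow the strategy of \cref{sec:cgs}: each degree-$3$ vertex gadget admits exactly the terminal assignments that are permutations of $(I,O,U)$, and each edge gadget transmits the state, swapping $I$ and $O$ and fixing $U$. A directed edge gadget must additionally forbid one state.

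The key observation is that with digits $\{1,2,3\}$ and a run of length $3$ with clue $6$, the three cells must be a permutation of $\{1,2,3\}$. So a vertex of $G$ becomes a horizontal run of length $3$ with clue $6$. Its three cells are the three terminals, and digits encode states, say $1\mapsto O$, $2\mapsto U$, $3\mapsto I$. Each terminal cell then continues vertically into a wire carrying its digit.

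Wires are built from length-$2$ runs. The clue $4$ on a length-$2$ run with digits $\{1,2,3\}$ forces $\{1,3\}$, which is useless here. The clues $3$ and $5$ (forcing $\{1,2\}$ and $\{2,3\}$) are also too restrictive. The useful move is to combine runs of length $2$ and $3$ so that a cell's value $d$ is mapped to $4-d$. This is exactly the required $I\leftrightarrow O$ swap with $U=2$ fixed, using digit symmetry $d\mapsto 4-d$. For example, a length-$3$ run with clue $6$ whose middle cell is shared with a second orthogonal permutation run can relay values.

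Concretely, I would design, on the grid:
\begin{itemize}
\item a straight wire segment (copy or complement of a digit in $\{1,2,3\}$);
\item a turn;
\item the vertex gadget, oriented as in the zigzag embedding of \cref{lem:zigzag}, where bipartiteness lets the two colour classes use complementary encodings, so each plain edge is just a copy wire;
\item a directed-edge gadget.
\end{itemize}
The directed-edge gadget must forbid exactly one ordered pair. I would try to realize it by sending the wire through a length-$2$ run with clue $3$ or $5$ placed on a complemented copy, which excludes the value $3$ or $1$ respectively.

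I would embed these gadgets following the rectilinear embedding of \cref{lem:zigzag}, possibly via the flow model of \cref{sec:flow_model}. Sinks become runs, and sources become the cells shared between horizontal and vertical runs. Here \cref{lem:relaxed_flow_model} suggests that exact sum constraints suffice to encode the flow balance. Every cell would be forced to lie in some run of length $2$ or $3$, with all blank cells connected.

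The main obstacle is parsimony at the gadget level. Every cell of every filler and turn gadget must be uniquely determined once the transmitted digits are fixed. In addition, the no-repeat rule together with run lengths of at most $3$ must not create extra solutions or kill intended ones. I would first try to find a small $3\times3$ or $4\times4$ tile realizing a copy wire with a bend, and verify it by exhaustive enumeration, since the search space over $\{1,2,3\}$ is tiny. The remaining gadgets would then be assembled from such tiles.

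Finally, since the construction uses only digits at most $3$, adding larger digits for $N>3$ could create new solutions. I would therefore choose all clues tight enough to exclude digits above $3$: clue $6$ on a length-$3$ run, and clues at most $5$ on length-$2$ runs used only in a context where $4$ is impossible. Tightness should be checked separately for wires using clue $5$, since $\{1,4\}$ also sums to $5$. If necessary I would replace such runs by a clue-$3$ run, which admits only $\{1,2\}$ for every $N$, or by a clue-$6$ length-$3$ run, which admits only $\{1,2,3\}$ for every $N$. This gives hardness uniformly for all $N\ge3$.
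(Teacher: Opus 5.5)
Your global plan matches the paper's. You reduce from Restricted CCP and make the three terminal digits of each vertex a permutation of $\{1,2,3\}$. You use the bipartition so that every edge gadget simply copies a digit, let directed edges exclude one digit, and pick clues that keep all digits $\le 3$ for every $N$. However, the substance of this theorem is the explicit gadgets, and you defer all of them to a future search. The paper supplies vertex gadgets (two per colour, exploiting that every vertex has an incident directed edge), directed-wire components with exactly two local solutions, and undirected-wire components with exactly three, and verifies those counts. Moreover, the one concrete relay you propose fails, and you do not confront the obstruction that makes the undirected wire nontrivial.

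\textbf{The obstruction.} For $N=3$ a length-$2$ run has clue $3$, $4$ or $5$, which confines both of its cells to $\{1,2\}$, $\{1,3\}$ or $\{2,3\}$. So any cell that takes all three values has to be the crossing of two length-$3$ runs with clue $6$. Every cell of a single-digit relay for an undirected edge must take all three values. A clue-$6$ run with one known entry still has two completions. Hence your crossing of two such runs has four completions for a given input and determines nothing.

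For the same reason your vertex gadget is unverified. It is a lone clue-$6$ run whose three adjacent cells each need their own vertical clue-$6$ run, and it is not obviously closable with all auxiliary cells forced.

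\textbf{What is needed instead.} A parsimonious ternary wire needs a different idea, for instance splitting the state into two binary signals. A clue-$6$ run $(d,x,y)$ with $x\in\{1,2\}$ and $y\in\{2,3\}$ enforced by suitable side runs realizes $1\mapsto(2,3)$, $2\mapsto(1,3)$, $3\mapsto(1,2)$. Each coordinate can then travel along a two-state wire. This is consistent with the paper's remark that its undirected component is extended through a portion built like the directed wire.

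\textbf{Why the flow model does not help.} Read a digit $d$ as the split $(d-1,3-d)$ between its two runs. Then an unconstrained degree-$2$ sink would be a length-$2$ run that must admit two $2$'s, which the no-repeat rule forbids. Also, \cref{lem:relaxed_flow_model} concerns inequality relaxations and is irrelevant to exact sums.

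\textbf{The clue-$5$ issue.} The paper encodes ``unused'' as $1$. Its two directed wire types therefore differ only by swapping clues $3$ and $4$ (admitting $\{1,2\}$ or $\{1,3\}$), and both are tight for every $N$. Your choice of $2$ for ``unused'' is what brings in the non-tight clue $5$ you worry about.
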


This result strengthens the previously known hardness results.
Moreover, it is straightforward that \#$(N, 2, 2)$-Kakuro is in FP for all $N$, since the input puzzle board must be a $2 \times 2$ block.
Consequently, this theorem allows us to completely classify the counting complexity of $(N, 2, L_\text{max})$-Kakuro for all $N \ge 2$ and $L_\text{max} \ge 2$.

\begin{corollary}
  \#$(N, 2, L_\text{max})$-Kakuro is in FP for $L_\text{max} = 2$, and ASP-complete for $N \ge L_\text{max} \ge 3$.
\end{corollary}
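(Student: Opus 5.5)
The corollary should follow by combining \cref{thm:kakuro} with the monotonicity observation about run lengths made earlier in this section, together with a direct analysis of the case $L_\text{max} = 2$. I will treat the two cases separately.

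\textbf{Case $N \ge L_\text{max} \ge 3$.} Membership is routine. A candidate solution is a digit assignment to the blank cells, and its validity (digits in $\{1,\dots,N\}$, correct run sums, no repeated digit in a run) is checkable in polynomial time. Hence the decision version is in NP and the counting version is in \#P.

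For hardness, the hypothesis $N \ge L_\text{max} \ge 3$ gives $N \ge 3$, so \cref{thm:kakuro} says $(N,2,3)$-Kakuro is ASP-complete. Since $2 \le 2 \le 3 \le L_\text{max}$, every instance of $(N,2,3)$-Kakuro is literally an instance of $(N,2,L_\text{max})$-Kakuro. Its solution set is the same, because the digit range $\{1,\dots,N\}$ is unchanged and only the allowed run lengths grow. So the identity map is a parsimonious reduction, and ASP-completeness transfers to $(N,2,L_\text{max})$-Kakuro.

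\textbf{Case $L_\text{max} = 2$.} Here I will make precise the claim that the board is a $2 \times 2$ block. Under the standing conventions (every run has at least two cells, and $L_\text{min} = 2$), every run has length exactly $2$, and each blank cell lies in both a horizontal and a vertical run. The argument then goes as follows.
\begin{itemize}
  \item Take any blank cell $c$. Its horizontal run is $\{c, c'\}$ for some horizontally adjacent blank cell $c'$.
  \item The vertical runs through $c$ and $c'$ each have length $2$. Their second cells must lie on the same side as each other, above or below. Otherwise some cell adjacent to $c$ or $c'$ would extend a horizontal run beyond length $2$, or the vertical run of one of these cells would have the wrong length. So the four cells form a $2 \times 2$ block.
  \item Every neighbour of this block is non-blank, since otherwise some run would have length $3$. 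By orthogonal connectivity of the blank cells, the block is the whole board.
\end{itemize}
Once this is established, the counting version can be solved by enumerating all at most $N^4$ assignments and checking the four run constraints. I take this to be polynomial in the input size, as in the paper's remark that \#$(N,2,2)$-Kakuro is in FP for all $N$. In fact only the two digits in one row are free once the column sums are used, so the count could also be written down explicitly.

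The only step needing care is the geometric claim that the board is forced to be a $2 \times 2$ block. Even this is a short local case check, so I do not expect a real obstacle. All the substantive difficulty lies in \cref{thm:kakuro}, which I am assuming.
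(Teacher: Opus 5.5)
Your overall route matches the paper's. For $N \ge L_\text{max} \ge 3$ you use \cref{thm:kakuro} together with the identity embedding of $(N,2,3)$-instances into $(N,2,L_\text{max})$-instances, which is parsimonious. For $L_\text{max}=2$ you rely on the board being a single $2\times 2$ block, which the paper simply asserts. The hardness part is fine.

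The gap is in your justification of the $2\times 2$ claim. For an \emph{arbitrary} blank cell $c$ with horizontal partner $c'$, a local case check does not force the vertical partners of $c$ and $c'$ onto the same side. Take the diagonal band $\{(i,j) : j-i \in \{0,1\}\}$, with row $i$ and column $j$. Every cell in it has exactly one horizontal and exactly one vertical blank neighbour, so every run has length exactly $2$. Yet for $c=(0,0)$ and $c'=(0,1)$, the vertical partners are $(-1,0)$ above and $(1,1)$ below. So neither contradiction you cite actually arises: no horizontal run of length $3$, and no vertical run of the wrong length. This staircase is excluded only globally, because the board is finite. The step therefore fails as written, even though its conclusion is true.

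The repair is short. Choose $c$ to be a blank cell in the topmost row that contains any blank cells. Then neither $c$ nor $c'$ can have its vertical partner above, so both partners lie below and the four cells form a $2\times 2$ block. Your remaining steps then go through unchanged: every neighbour of the block is non-blank, and orthogonal connectivity makes the block the whole board. With that fix, the argument is complete. Your enumeration over at most $N^4$ assignments, or the explicit count you mention, then gives the FP bound.
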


\begin{proof}[Proof of \cref{thm:kakuro}]
  We construct a reduction from Restricted CCP.
  Let $G = (V_1 \cup V_2, E, A)$ be a planar degree-3 bipartite mixed graph in which every vertex is incident to a directed edge.
  The key idea is to encode the state of each edge in a cycle cover of $G$ using digits assigned to the terminals of Kakuro gadgets.
  Specifically, an edge directed from a vertex in $V_1$ to a vertex in $V_2$ corresponds to digit $2$, an edge in the opposite direction corresponds to digit $3$, and an unused edge corresponds to digit $1$.

  We construct vertex and edge gadgets.
  Each gadget has two types of terminals: \emph{red} and \emph{yellow}.
  A red terminal belongs to a horizontal run within the gadget and is exposed either upwards or downwards,
  whereas a yellow terminal belongs to a vertical run within the gadget and is exposed either leftwards or rightwards.
  Gadgets are connected so that a red terminal of one gadget is identified with a yellow terminal of another, without interference of blank cells.
  Each vertex gadget has three terminals of the same color, determined by the bipartition of the vertex:
  red for $V_1$ and yellow for $V_2$.
  Each edge gadget connects terminals of opposite colors.

  Since every vertex in $G$ is incident to at least one directed edge, it suffices to construct two types of vertex gadgets per color: one incident to an incoming edge and one incident to an outgoing edge.
  \cref{fig:kakuro_vertex} illustrates the former.
  The lower-left terminal represents an incoming edge and admits digits in $\{1,3\}$ in the red vertex gadget (resp., $\{1,2\}$ in the yellow vertex gadget).
  The latter is obtained by replacing all clues $3$ with $4$ (resp., $4$ with $3$).
  Note that a $90^\circ$ rotation exchanges terminal colors, and thereby reverses the direction of the edge.

  \begin{figure}[tb]
    \begin{subfigure}{\linewidth}
      \begin{subfigure}{0.24\linewidth}
        \centering
        \includegraphics{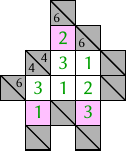}
      \end{subfigure}\hfill
      \begin{subfigure}{0.24\linewidth}
        \centering
        \includegraphics{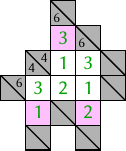}
      \end{subfigure}\hfill
      \begin{subfigure}{0.24\linewidth}
        \centering
        \includegraphics{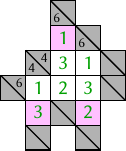}
      \end{subfigure}\hfill
      \begin{subfigure}{0.24\linewidth}
        \centering
        \includegraphics{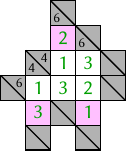}
      \end{subfigure}
      \caption{Red}\label{fig:kakuro_vertex_red}
    \end{subfigure}\vspace{1em}

    \begin{subfigure}{\linewidth}
      \begin{subfigure}{0.24\linewidth}
        \centering
        \includegraphics{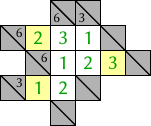}
      \end{subfigure}\hfill
      \begin{subfigure}{0.24\linewidth}
        \centering
        \includegraphics{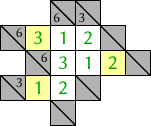}
      \end{subfigure}\hfill
      \begin{subfigure}{0.24\linewidth}
        \centering
        \includegraphics{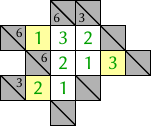}
      \end{subfigure}\hfill
      \begin{subfigure}{0.24\linewidth}
        \centering
        \includegraphics{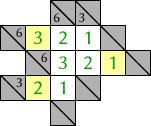}
      \end{subfigure}
      \caption{Yellow}\label{fig:kakuro_vertex_yellow}
    \end{subfigure}

    \caption{
      Vertex gadgets for Kakuro (incident to a lower-left incoming edge) and their local solutions.
      In three terminals of each gadget, the assigned digits are different to one another, corresponding to the three incident edges of the vertex.
    }
    \label{fig:kakuro_vertex}
  \end{figure}

  Vertex gadgets are connected using three types of edge gadgets: directed (red-to-yellow and yellow-to-red), and undirected.
  \cref{fig:kakuro_directed_edge} illustrates the components of a directed edge gadget from red to yellow.
  Each component admits exactly two local solutions, forcing the terminal values to be either both $1$ or both $2$.
  We can extend the gadget, bend it, or invert its terminal parity.
  A directed edge gadget from yellow to red is obtained by swapping clue values $3$ and $4$.
  The components of an undirected edge gadget are illustrated in \cref{fig:kakuro_undirected_edge}.
  Each component admits exactly three local solutions, corresponding to terminal values $1$, $2$, or $3$.
  As with directed edges, these gadgets can be extended and bent.
  By appropriately configuring the vertex gadgets, we can match the parity of any terminal that admits an undirected edge; thus no parity inversion component is needed.
  These edge gadgets can, indeed, be placed so as not to interfere with one another or with vertex gadgets.

  \begin{figure}[tb]
    \begin{subfigure}{0.32\linewidth}
      \centering
      \includegraphics{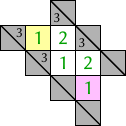}
      \caption{Straight}\label{fig:kakuro_directed_edge_straight}
    \end{subfigure}\hfill
    \begin{subfigure}{0.32\linewidth}
      \centering
      \includegraphics{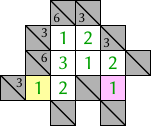}
      \caption{Turn}\label{fig:kakuro_directed_edge_turn}
    \end{subfigure}\hfill
    \begin{subfigure}{0.32\linewidth}
      \centering
      \includegraphics{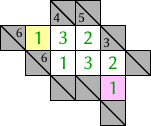}
      \caption{Parity inversion}\label{fig:kakuro_directed_edge_parity_inversion}
    \end{subfigure}
    \caption{
      Components of a directed edge gadget for Kakuro (from red to yellow) and an example of their local solutions.
      The two terminals of each component admit the same digit: either $1$ or $2$.
      The straight component can be extended by inserting additional blank cells with appropriate clues.
    }
    \label{fig:kakuro_directed_edge}
  \end{figure}

  \begin{figure}[tb]
    \begin{subfigure}{0.48\linewidth}
      \centering
      \includegraphics{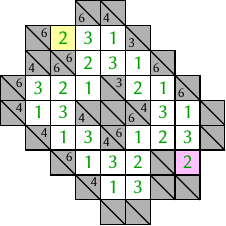}
      \caption{Straight}
    \end{subfigure}\hfill
    \begin{subfigure}{0.48\linewidth}
      \centering
      \includegraphics{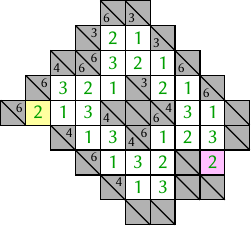}
      \caption{Turn}
    \end{subfigure}
    \caption{
      Components of an undirected edge gadget for Kakuro and an example of their local solutions.
      The two terminals of each component admit the same digit: either $1$, $2$, or $3$.
      The straight component can be extended at the portion that has a similar structure to the straight component of a directed edge gadget.
    }
    \label{fig:kakuro_undirected_edge}
  \end{figure}

  The resulting Kakuro instance contains runs of length only $2$ and $3$, and any solution uses only digits in $\{1,2,3\}$ regardless of the availability of other digits.
  This completes the proof.
\end{proof}
 \subsection{Choco Banana}

\emph{Choco Banana}~\cite{ChocoBananaRule} is a puzzle first introduced in \textit{Puzzle Communication Nikoli} Vol.~176.
Given a rectangular grid, the goal is to shade some cells such that every orthogonally connected block of shaded cells forms a rectangle (or a square), while no orthogonally connected block of unshaded cells forms a rectangle.
A number indicates the area of the block (shaded or unshaded) containing the cell.
Choco Banana is known to be NP-complete~\cite{Iwamoto2024ChocoBanana}.

Before constructing the reduction, we note an observation about Choco Banana:
If three cells containing the same prime numbers $p$ form an L-shaped block, they must all be unshaded.
This is verified by a careful case analysis of all shading assignments, given that a shaded rectangle of area $p$ is restricted to a $1 \times p$ or $p \times 1$ shape, and no unshaded block with a clue $p$ can form such a shape.
(This observation is a known technique among avid Choco Banana solvers.)

Now we construct a reduction via the flow model as illustrated in \cref{fig:chocobanana_flow_model}.
The configuration of unshaded cells with clues $37$ and $13$ is fixed by the observation above.
Each block of unshaded cells containing the clue $37$ corresponds to a sink, while each block of undetermined cells corresponds to a source.
These gadgets correctly realize the model, as shown below:
Each sink requires unshaded cells equal to the difference between its current size and $37$, which equals its degree.
On the other hand, each source supplies a total of two unshaded cells to the adjacent sinks (\cref{fig:chocobanana_source}).
\cref{fig:chocobanana_source_others} illustrates three variants of the sources, each of which restricts the admissible flow distributions as required.

\begin{figure}
\begin{minipage}[b]{0.48\linewidth}
  \centering
  \includegraphics[scale=\pdfscale]{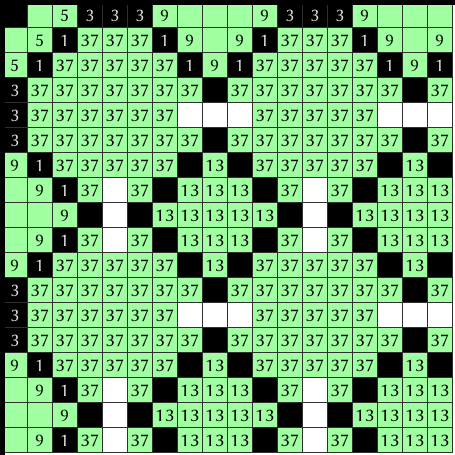}
  \captionof{figure}{
    Global arrangement of sources and sinks for Choco Banana.
    Degree-$2$ sink at upper left, degree-$3$ sinks at upper right and lower left, and degree-$4$ sink at lower right.
  }
  \label{fig:chocobanana_flow_model}
\end{minipage}
\hfill
\begin{minipage}[b]{0.48\linewidth}
  \begin{subfigure}{0.48\linewidth}
    \centering
    \includegraphics[scale=\pdfscale]{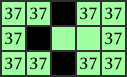}
    \caption{Flow values $(0, 2)$}\label{fig:chocobanana_source_02}
  \end{subfigure}\hfill
  \begin{subfigure}{0.48\linewidth}
    \centering
    \includegraphics[scale=\pdfscale]{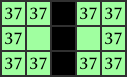}
    \caption{Flow values $(1, 1)$}\label{fig:chocobanana_source_11}
  \end{subfigure}\vspace{1em}

  \begin{subfigure}{\linewidth}
    \centering
    \includegraphics[scale=\pdfscale]{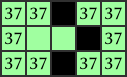}
    \caption{Flow values $(2, 0)$}\label{fig:chocobanana_source_20}
  \end{subfigure}
  \captionof{figure}{
    Source for Choco Banana and its three local solutions, corresponding to the flow distributions.
  }
  \label{fig:chocobanana_source}
\end{minipage}
\end{figure}

\begin{figure}
  \begin{subfigure}{0.32\linewidth}
    \centering
    \includegraphics[scale=\pdfscale]{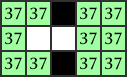}
    \caption{Biased (rightward)}\label{fig:chocobanana_source_biased}
  \end{subfigure}
  \hfill
  \begin{subfigure}{0.32\linewidth}
    \centering
    \includegraphics[scale=\pdfscale]{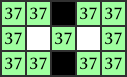}
    \caption{Exclusive}\label{fig:chocobanana_source_exclusive}
  \end{subfigure}
  \hfill
  \begin{subfigure}{0.32\linewidth}
    \centering
    \includegraphics[scale=\pdfscale]{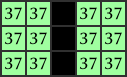}
    \caption{Fixed}\label{fig:chocobanana_source_fixed}
  \end{subfigure}
  \caption{
    Variants of the sources for Choco Banana.
  }
  \label{fig:chocobanana_source_others}
\end{figure}

Because this reduction is parsimonious, we obtain the following result.

\begin{theorem}
  Choco Banana is ASP-complete.
\end{theorem}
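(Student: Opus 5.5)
Membership in NP is immediate: a shading can be checked in polynomial time by computing the connected blocks, testing whether each shaded block is a rectangle and each unshaded block is not, and comparing block areas with the clues. For hardness I will give a parsimonious reduction from Restricted RCCP, which is ASP-complete by \cref{thm:main}. Given an instance $G$, I first apply \cref{lem:zigzag} to obtain a flow network $N$ that uses only T-shaped, L-shaped and blank sinks, with sources of the four types. By \cref{lem:flow_model_equivalence}, flows in $N$ are in bijection with cycle covers of $G$. It therefore suffices to show that the Choco Banana instance built by tiling the gadgets of \cref{fig:chocobanana_flow_model,fig:chocobanana_source,fig:chocobanana_source_others} along $N$ has solutions in bijection with flows of $N$.

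The argument has four steps.
\begin{enumerate}
  \item \textbf{Frame.} The L-shaped triples of equal prime clues ($37$ and $13$) are forced unshaded by the observation above. This fixes a skeleton of unshaded cells that partitions the board into sink blocks (each containing a $37$) and source regions, independently of the instance.
  \item \textbf{Sources.} For each source region, I enumerate its local shadings by case analysis. The shaded cells must form rectangles, so each unshaded cell in the region must join one of the two adjacent sink blocks rather than form an unshaded block of its own. The unconstrained gadget should allow exactly three local solutions, giving $2$, $1$, or $0$ unshaded cells to the first neighbour and the rest to the second. The biased, exclusive and fixed variants should allow exactly the restricted subsets listed earlier, and the local solution should be determined by the split.
  \item \textbf{Sinks.} A $37$-block has base size $37-d$, where $d$ is the sink's degree, so its clue holds exactly when it receives $d$ unshaded cells in total. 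This is flow conservation at the sink. I also need to check that no unshaded block becomes a rectangle. The frame shapes should be chosen so that every sink block is non-rectangular whatever is attached to it; I will verify this for the T, L and blank sink shapes.
  \item \textbf{Gluing.} I must check that adjacent gadgets never merge blocks unexpectedly and that shaded rectangles stay confined to their source regions. Then a global solution is exactly a choice of one local solution per source that satisfies every sink, which is a flow of $N$.
\end{enumerate}
Combining this bijection with \cref{lem:flow_model_equivalence} makes the reduction parsimonious. The construction is polynomial because the embedding from \cref{lem:zigzag} has polynomial size.

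I expect the main obstacle to be the exhaustive local analysis in the source step together with the non-rectangularity check in the sink step. I must exclude shadings in which a source region forms its own unshaded block of some area, or shades cells in a way that creates an unexpected rectangle. I would handle this by exploiting the large clue values, so that any small isolated unshaded block automatically violates its clue.

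If the exact-equality bookkeeping at the sinks becomes delicate, I can fall back on \cref{lem:relaxed_flow_model}. It suffices to show that each sink receives at most its degree, since the total supply equals the total demand and equality then follows automatically.
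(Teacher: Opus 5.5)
Your proposal is correct and follows essentially the same approach as the paper: a parsimonious reduction from Restricted RCCP through the flow model. In both, the prime L-triples with clues $37$ and $13$ force the unshaded frame. Each $37$-block then acts as a sink whose demand equals its degree, and each undetermined block acts as a source supplying two unshaded cells, with biased, exclusive, and fixed variants. The one difference is your detour through \cref{lem:zigzag}, which is harmless but unnecessary: the paper applies \cref{lem:flow_model_equivalence} directly on the full grid with sinks of degree $2$, $3$, $4$, because the Choco Banana gadgets depend only on sink degree and source type, not on the T/L/blank classification.
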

 \subsection{Shimaguni}

\emph{Shimaguni}~\cite{ShimaguniRule} (lit. island country) is a puzzle first introduced in \textit{Puzzle Communication Nikoli} Vol.~117.
Given a rectangular grid partitioned into regions, the goal is to shade some cells according to the following rules:
Each region must contain exactly one orthogonally connected block of shaded cells (called an ``island'').
A number in a region indicates the area of the island in that region.
Islands in different regions cannot be orthogonally adjacent.
Finally, islands in orthogonally adjacent regions must have different sizes.

We construct a reduction via the flow model as illustrated in \cref{fig:shimaguni_flow_model}.
Each cross-shaped region containing a clue~$9$ corresponds to a sink, while each region with a clue~$7$ connecting two arms of sinks corresponds to a source.
These sinks and sources function as intended, as we demonstrate below:
Each source supplies a total of two unshaded cells to the two connected sinks (\cref{fig:shimaguni_source}).
On the other hand, each sink accepts at most as many unshaded cells as its degree from its adjacent sources, due to the difference between its area and clue value.
By \cref{lem:relaxed_flow_model}, this suffices for the reduction.
\cref{fig:shimaguni_source_others} illustrates three variants of sources, each of which restricts the admissible flow distributions as required.

\begin{figure}
\begin{minipage}[b]{0.48\linewidth}
  \centering
  \includegraphics[scale=\pdfscale]{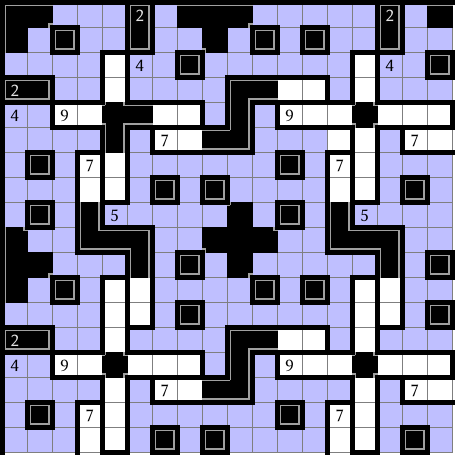}
  \captionof{figure}{
    Global arrangement of sources and sinks for Shimaguni.
    Degree-$2$ sink at upper left, degree-$3$ sinks at upper right and lower left, and degree-$4$ sink at lower right.
  }
  \label{fig:shimaguni_flow_model}
\end{minipage}
\hfill
\begin{minipage}[b]{0.48\linewidth}
  \begin{subfigure}{0.48\linewidth}
    \centering
    \includegraphics[scale=\pdfscale]{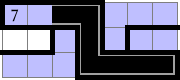}
    \caption{Flow values $(0, 2)$}\label{fig:shimaguni_source_02}
  \end{subfigure}\hfill
  \begin{subfigure}{0.48\linewidth}
    \centering
    \includegraphics[scale=\pdfscale]{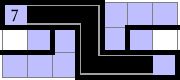}
    \caption{Flow values $(1, 1)$}\label{fig:shimaguni_source_11}
  \end{subfigure}\vspace{1em}

  \begin{subfigure}{\linewidth}
    \centering
    \includegraphics[scale=\pdfscale]{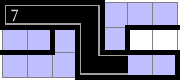}
    \caption{Flow values $(2, 0)$}\label{fig:shimaguni_source_20}
  \end{subfigure}
  \captionof{figure}{
    Source for Shimaguni and its three local solutions.
    In each solution, the number of unshaded cells in the sinks corresponds to the flow distributions.
  }
  \label{fig:shimaguni_source}
\end{minipage}
\end{figure}

\begin{figure}
  \begin{subfigure}{0.32\linewidth}
    \centering
    \includegraphics[scale=\pdfscale]{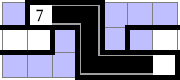}
    \caption{Biased (rightward)}\label{fig:shimaguni_source_biased}
  \end{subfigure}
  \hfill
  \begin{subfigure}{0.32\linewidth}
    \centering
    \includegraphics[scale=\pdfscale]{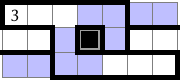}
    \caption{Exclusive}\label{fig:shimaguni_source_exclusive}
  \end{subfigure}
  \hfill
  \begin{subfigure}{0.32\linewidth}
    \centering
    \includegraphics[scale=\pdfscale]{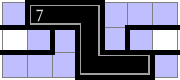}
    \caption{Fixed}\label{fig:shimaguni_source_fixed}
  \end{subfigure}
  \caption{
    Variants of the sources for Shimaguni.
  }
  \label{fig:shimaguni_source_others}
\end{figure}

Because this reduction is parsimonious, we obtain the following result.

\begin{theorem}
  Shimaguni is ASP-complete.
\end{theorem}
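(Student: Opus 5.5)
The theorem is that Shimaguni is ASP-complete. The plan is to certify the parsimonious reduction from Restricted RCCP that goes through the flow model of \cref{sec:flow_model}. Membership in NP is immediate. For hardness, take a Restricted RCCP instance $G$ and build the network $N$ from \cref{lem:zigzag}. Then realize each T-shaped, L-shaped and blank sink as a cross-shaped region with clue $9$, with arms pointing towards its neighbours. Realize each source as a clue-$7$ region joining two arms, using the unconstrained, biased, exclusive or fixed variant (\cref{fig:shimaguni_source,fig:shimaguni_source_others}). Since the gadgets tile the board densely, every cell lies in some region and there is no free space.

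The first step is a local analysis of a single source region. I would show the following:
\begin{itemize}
\item the island of size $7$ in the source region can occupy only a few placements;
\item each placement forces exactly two cells of the neighbouring arms to stay unshaded, because islands in different regions may not be orthogonally adjacent;
\item the forced cells are split between the two sinks as $(0,2)$, $(1,1)$ or $(2,0)$;
\item each variant admits exactly its prescribed subset of these splits;
\item each split is realized by exactly one placement.
\end{itemize}
The last point is essential for parsimony.

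The second step is the sink side. A sink region has area larger than $9$ by exactly its degree, so it can lose at most $\deg$ cells to unshading forced by neighbours. The flow into a sink is therefore at most its demand, which is the relaxed inequality version. By \cref{lem:relaxed_flow_model}, together with the fact that total supply equals total demand, flows in this relaxed sense are exactly the flows of $N$. Hence by \cref{lem:flow_model_equivalence} they are in bijection with the cycle covers of $G$.

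The third step is to prove that each flow extends to exactly one full Shimaguni solution, and this is the main obstacle. Given the forced unshaded cells, the clue-$9$ island in a sink must be the unique connected set of $9$ remaining cells. That means the sink shape has to be designed so the island never has a choice, for example by having the forced cells cut exactly the arm tips. I also have to check the remaining rules:
\begin{itemize}
\item adjacent islands never touch, which the forced unshaded cells handle;
\item adjacent regions always carry the different sizes $9$ and $7$;
\item source regions are only adjacent to sinks, which needs checking in the layout.
\end{itemize}
I would first try to argue this by exhaustive checking of the constant-size gadgets, shown in the figures, rather than by a general argument.
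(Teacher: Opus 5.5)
Your proposal is correct and follows the paper's argument. The shared ingredients are:
\begin{itemize}
\item clue-$9$ cross-shaped sink regions whose area exceeds the clue by their degree;
\item clue-$7$ source regions whose three local solutions supply two unshaded cells, split as the flow requires (with the biased, exclusive and fixed variants);
\item \cref{lem:relaxed_flow_model}, which turns the ``at most $\deg$ unshaded cells'' inequality at each sink into equality.
\end{itemize}
Two small points: the paper uses the general network of \cref{lem:flow_model_equivalence} (degree-$2,3,4$ sinks) rather than \cref{lem:zigzag}, though your specialization is harmless. Also, once every sink has exactly $\deg$ forced unshaded cells, its island is automatically the remaining $9$ cells, so your third step only needs connectivity and non-adjacency checks, not a uniqueness argument.
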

 \subsection{Hinge}

\emph{Hinge}~\cite{HingeRule} is a puzzle first introduced in \textit{Puzzle Communication Nikoli} Vol.~159.
Given a rectangular grid partitioned into regions, the goal is to shade cells according to the following rules:
Each orthogonally connected block of shaded cells must be intersected exactly once by a single straight line segment of region boundaries (called a ``hinge'').
Furthermore, each such block must be symmetric with respect to its hinge.
A number in a region indicates the number of shaded cells contained in that region.

We construct a reduction via the flow model as illustrated in \cref{fig:hinge_flow_model}.
Each region with arms corresponds to a sink with degree equal to the number of arms, while each $1 \times 3$ or $3 \times 1$ region that connects two arms corresponds to a source.
These gadgets correctly realize the model, as shown below:
Each source supplies a total of two shaded cells to the two connected sinks (\cref{fig:hinge_source}).
On the other hand, each sink requires at least as many shaded cells as its degree from its adjacent sources.
By \cref{lem:relaxed_flow_model}, this suffices for the reduction.
\cref{fig:hinge_source_others} illustrates three variants of sources, each of which restricts the admissible flow distributions as required.

\begin{figure}
\begin{minipage}[b]{0.48\linewidth}
  \centering
  \includegraphics[scale=\pdfscale]{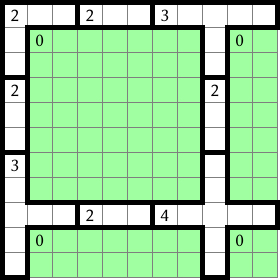}
  \captionof{figure}{
    Global arrangement of sources and sinks for Hinge.
    Degree-$2$ sink at upper left, degree-$3$ sinks at upper right and lower left, and degree-$4$ sink at lower right.
  }
  \label{fig:hinge_flow_model}
\end{minipage}
\hfill
\begin{minipage}[b]{0.48\linewidth}
  \begin{subfigure}{0.48\linewidth}
    \centering
    \includegraphics[scale=\pdfscale]{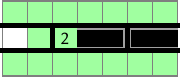}
    \caption{Flow values $(0, 2)$}\label{fig:hinge_source_02}
  \end{subfigure}\hfill
  \begin{subfigure}{0.48\linewidth}
    \centering
    \includegraphics[scale=\pdfscale]{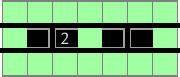}
    \caption{Flow values $(1, 1)$}\label{fig:hinge_source_11}
  \end{subfigure}\vspace{1em}

  \begin{subfigure}{\linewidth}
    \centering
    \includegraphics[scale=\pdfscale]{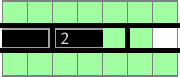}
    \caption{Flow values $(2, 0)$}\label{fig:hinge_source_20}
  \end{subfigure}
  \captionof{figure}{
    Source for Hinge and its three local solutions, corresponding to the flow distributions.
  }
  \label{fig:hinge_source}
\end{minipage}
\end{figure}

\begin{figure}
  \begin{subfigure}{0.32\linewidth}
    \centering
    \includegraphics[scale=\pdfscale]{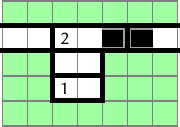}
    \caption{Biased (rightward)}\label{fig:hinge_source_biased}
  \end{subfigure}
  \hfill
  \begin{subfigure}{0.32\linewidth}
    \centering
    \includegraphics[scale=\pdfscale]{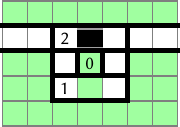}
    \caption{Exclusive}\label{fig:hinge_source_exclusive}
  \end{subfigure}
  \hfill
  \begin{subfigure}{0.32\linewidth}
    \centering
    \includegraphics[scale=\pdfscale]{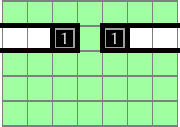}
    \caption{Fixed}\label{fig:hinge_source_fixed}
  \end{subfigure}
  \caption{
    Variants of the sources for Hinge.
  }
  \label{fig:hinge_source_others}
\end{figure}

These sources can be arranged so as not to interfere with one another.
Because this reduction is parsimonious, we obtain the following result.

\begin{theorem}
  Hinge is ASP-complete.
\end{theorem}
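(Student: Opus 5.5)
The plan is to combine the flow model of \cref{sec:flow_model} with the relaxed form in \cref{lem:relaxed_flow_model}, and to verify the Hinge gadgets locally. Hinge is in NP, and a solution can be checked in polynomial time, so only hardness under a parsimonious reduction needs proof. We start from a Restricted RCCP instance $G$, which is ASP-complete by \cref{thm:main}. By \cref{lem:zigzag}, we embed $G$ so that the flow network $N$ uses only T-shaped, L-shaped and blank sinks, all placed on a rectangular grid. We then tile the board as in \cref{fig:hinge_flow_model}. Each sink becomes a region whose arms point toward its incident sources. Each source becomes a $1\times 3$ or $3\times 1$ region joining two arms, with the variants of \cref{fig:hinge_source_others} chosen according to the source type.

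The local argument comes first. Take a source region together with the two adjacent arm cells. We show that its shaded cells form blocks that each meet exactly one hinge and are symmetric about it, and that this happens only in the configurations of \cref{fig:hinge_source}. In each such configuration, exactly two shaded cells contribute to the two neighbouring sinks, which gives the distributions $(0,2)$, $(1,1)$ and $(2,0)$. The biased, exclusive and fixed variants should remove precisely the forbidden distributions. The proof uses the clue inside the source region together with the symmetry requirement: a shaded block that crosses a boundary must be mirrored across that single boundary segment. For each sink region, its clue and shape should force it to receive at least as many shaded cells from its arms as its degree. Interior cells of the sink should contribute a fixed amount, so that exactly the arm contributions are free.

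The global argument follows. Every shaded block is confined to one source together with the arms it touches. This holds because fixed sources and blank sinks fill all remaining space, and because the block boundaries make any larger block violate the ``exactly one hinge'' rule. Hence a Hinge solution is the same as choosing one local configuration per source such that every sink receives at least its degree. Supply and demand totals are equal, so \cref{lem:relaxed_flow_model} turns this into a flow in $N$. By \cref{lem:flow_model_equivalence}, such flows correspond bijectively to cycle covers of $G$. Each flow value determines its local shading uniquely, so the reduction is parsimonious.

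The main obstacle is non-interference. Shaded blocks belonging to neighbouring sources must not merge, and no unintended block may straddle two hinges or be symmetric about an unexpected segment, especially at the corners of T-shaped and L-shaped sinks where several arms meet. I would handle this first by an exhaustive case check of a sink corner with its two incident sources. If merging can occur, I would pad the arms so that active cells of different sources are never orthogonally adjacent.
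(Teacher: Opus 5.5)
Your proposal matches the paper's proof: $1\times 3$ source regions supplying two shaded cells in the distributions $(0,2)$, $(1,1)$, $(2,0)$, with variant gadgets for the biased, exclusive and fixed sources; sink regions that must receive at least their degree; \cref{lem:relaxed_flow_model} to make every constraint tight; and a non-interference check for parsimony. The only difference is your detour through \cref{lem:zigzag}, which is harmless but unnecessary, since the paper builds the Hinge reduction on the general flow model (sinks of degree $2$, $3$ or $4$ according to their number of arms) and uses the T/L-shaped restriction only for Four Cells.
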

 \subsection{Chocona}

\emph{Chocona}~\cite{ChoconaRule} is a puzzle first introduced in \textit{Puzzle Communication Nikoli} Vol.~122.
Given a rectangular grid partitioned into regions, the goal is to shade some cells so that every orthogonally connected block of shaded cells forms a rectangle (or a square).
Each number indicates the number of shaded cells contained in that region.

We construct a reduction via the flow model as illustrated in \cref{fig:chocona_flow_model}.
Each region with four arms corresponds to a sink, while each block of undetermined cells that connects two arms of sinks corresponds to a source.
These gadgets correctly realize the model, as shown below:
Each source supplies a total of two shaded cells to the two connected sinks (\cref{fig:chocona_source}).
On the other hand, each sink contains a clue value equal to its degree plus one and exactly one extra undetermined cell other than the connections to sources;
thus, each sink requires at least as many shaded cells as its degree from its adjacent sources.
By \cref{lem:relaxed_flow_model}, this suffices for the reduction.
\cref{fig:chocona_source_others} illustrates three variants of the sources, each of which restricts the admissible flow distributions as required.

\begin{figure}
\begin{minipage}[b]{0.48\linewidth}
  \centering
  \includegraphics[scale=\pdfscale]{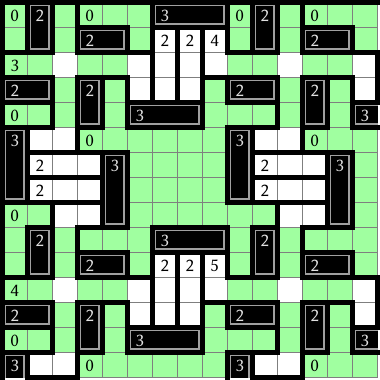}
  \captionof{figure}{
    Global arrangement of sources and sinks for Chocona.
    Degree-$2$ sink at upper left, degree-$3$ sinks at upper right and lower left, and degree-$4$ sink at lower right.
  }
  \label{fig:chocona_flow_model}
\end{minipage}
\hfill
\begin{minipage}[b]{0.48\linewidth}
  \begin{subfigure}{0.48\linewidth}
    \centering
    \includegraphics[scale=\pdfscale]{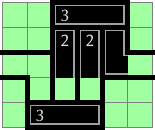}
    \caption{Flow values $(0, 2)$}\label{fig:chocona_source_02}
  \end{subfigure}\hfill
  \begin{subfigure}{0.48\linewidth}
    \centering
    \includegraphics[scale=\pdfscale]{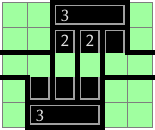}
    \caption{Flow values $(1, 1)$}\label{fig:chocona_source_11}
  \end{subfigure}\vspace{1em}

  \begin{subfigure}{\linewidth}
    \centering
    \includegraphics[scale=\pdfscale]{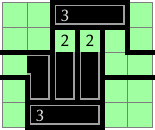}
    \caption{Flow values $(2, 0)$}\label{fig:chocona_source_20}
  \end{subfigure}
  \captionof{figure}{
    Source for Chocona and its three local solutions, corresponding to the flow distributions.
  }
  \label{fig:chocona_source}
\end{minipage}
\end{figure}

\begin{figure}
  \begin{subfigure}{0.32\linewidth}
    \centering
    \includegraphics[scale=\pdfscale]{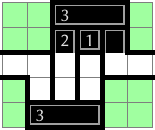}
    \caption{Biased (rightward)}\label{fig:chocona_source_biased}
  \end{subfigure}
  \hfill
  \begin{subfigure}{0.32\linewidth}
    \centering
    \includegraphics[scale=\pdfscale]{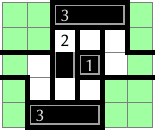}
    \caption{Exclusive}\label{fig:chocona_source_exclusive}
  \end{subfigure}
  \hfill
  \begin{subfigure}{0.32\linewidth}
    \centering
    \includegraphics[scale=\pdfscale]{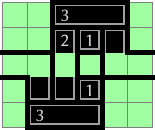}
    \caption{Fixed}\label{fig:chocona_source_fixed}
  \end{subfigure}
  \caption{
    Variants of the sources for Chocona.
  }
  \label{fig:chocona_source_others}
\end{figure}

Because this reduction is parsimonious, we obtain the following result.
\begin{theorem}
  Chocona is ASP-complete.
\end{theorem}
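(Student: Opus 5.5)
The statement to prove is that Chocona is ASP-complete. Membership in NP is immediate, since a shading can be checked in polynomial time. For hardness I will give a parsimonious reduction from Restricted RCCP (\cref{thm:main}) through the flow model. By \cref{lem:zigzag}, any instance $G$ yields a flow network $N$ that uses only T-shaped, L-shaped and blank sinks, and the four source types (unconstrained, biased, exclusive, fixed) lie on grid segments. By \cref{lem:flow_model_equivalence} and \cref{lem:relaxed_flow_model}, it then suffices to build a Chocona board with two properties:
\begin{itemize}
  \item its solutions correspond bijectively to functions assigning each source an admissible distribution;
  \item every sink receives at least its demand in shaded cells.
\end{itemize}
The equality of total supply and total demand then forces exact conservation.

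The construction tiles the board along the grid of $N$.
\begin{itemize}
  \item Each lattice point becomes a cross-shaped region with four arms. Its clue is $\deg+1$, and besides the arm cells meeting sources it has exactly one free undetermined cell.
  \item Each segment gets a source block: a small strip of undetermined cells joining the tip of one sink's arm to the tip of the neighbouring sink's arm.
  \item Fixed segments (blank directions) get a fixed-source variant, so that every sink physically has degree four.
  \item All remaining cells are filled with regions whose shading is forced: unshaded via clue $0$, or shaded in forced rectangles.
\end{itemize}
The rectangle rule must isolate the gadgets from one another. To that end, the separating regions are clue-$0$ regions, which keeps every shaded block inside one source-plus-arms neighbourhood.

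The first key step is the local analysis of a single source. An unconstrained source must admit exactly three shadings, and these must put $0,1,2$ shaded cells into one adjacent sink arm and $2,1,0$ into the other, as in \cref{fig:chocona_source}. Each shaded block must remain a rectangle, and the shading must be determined uniquely by the distribution. I would build it as a $1\times$ few strip straddling the two arms, with a clue-$2$ region forcing two shaded cells somewhere in a line. The variants in \cref{fig:chocona_source_others} are obtained by deleting or blocking cells so that only the allowed distributions survive: $\{(1,1),(2,0)\}$ for biased, $\{(0,2),(2,0)\}$ for exclusive, and $\{(1,1)\}$ for fixed.

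The second key step is the sink. It has clue $\deg+1$, and the shaded cells coming from its arms plus its single free cell must total that clue. Since the free cell contributes at most one, the arms must contribute at least $\deg$. This is exactly the relaxed inequality of \cref{lem:relaxed_flow_model}, and it forces the free cell to be shaded. Shading that free cell must not merge blocks into a non-rectangle. So I will place it at the centre, isolated from the arm cells by cells forced unshaded, and make sure it is forced rather than optional.

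The main obstacle is parsimony, together with non-interference under the global rectangle rule. Shaded cells from two different sources entering the same sink could touch and form an L-shape. Also, each flow value must correspond to exactly one shading, not several shifted placements. I would first try to settle this by checking every combination of source states around a degree-$4$ sink by hand, spacing the arm entries so that blocks from different arms never touch. Then the global solution count factors as a product of local counts, which gives a parsimonious reduction.
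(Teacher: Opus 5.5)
Your skeleton is exactly the paper's: four-armed sink regions with clue $\deg+1$ and one extra undetermined cell, sources as undetermined blocks joining two arms, and \cref{lem:relaxed_flow_model} to upgrade ``at least the degree'' to exact conservation. (The paper uses the general network behind \cref{lem:flow_model_equivalence}, whose sinks have degree $2$, $3$, $4$ at corners, border and interior. So \cref{lem:zigzag} is unnecessary, though harmless, and border sinks do not have degree four; your $\deg+1$ clue already copes with this.) The gap is the only part specific to Chocona, the source gadget. You do not construct one, and the construction you sketch would fail.

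The cells a source hands to a sink must belong to that sink's region, since that is the only way the clue $\deg+1$ counts them. A separate clue-$2$ region counts only its own cells. The only clues that see the supply cells are the two sink clues, and these bound whole-sink totals, not the output of one source. So the per-source bound must come from the rectangle rule acting against forced cells. That bound is indispensable. The counting behind \cref{lem:relaxed_flow_model} needs each source to emit at most $2$ locally; otherwise a source emitting $(2,2)$ lets both of its sinks take $\deg+1$ through their arms with the free cell unshaded, which is a solution matching no flow.

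A bare strip $a_2a_1b_1b_2$ fenced by forced-unshaded cells fails on both counts, because $1\times 1$ blocks are legal rectangles. It admits $3$ or $4$ shaded cells, and it realizes $(1,1)$ in four ways: $\{a_1,b_1\}$, $\{a_2,b_1\}$, $\{a_1,b_2\}$, $\{a_2,b_2\}$. What is needed, and what \cref{fig:chocona_source,fig:chocona_source_others} provide, is a gadget whose forced shaded and unshaded cells do two things: each admissible distribution arises from exactly one shading, and every other shading is excluded, in particular totals above $2$. You need this in unconstrained, biased, exclusive and fixed versions, together with a check that blocks from different arms and the sink's free cell never merge into non-rectangles. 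With that in place, the parsimony statement you want is that each flow lifts to exactly one shading, not a product formula over local counts.
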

 \subsection{Five Cells}\label{sec:five_cells}

\emph{Five Cells}~\cite{FiveCellsRule} is a puzzle first introduced in \textit{Puzzle Communication Nikoli} Vol.~133.
The goal is to partition a board into pentominoes.
A number in a cell indicates how many edges of that cell belong to the pentomino boundary.
Five Cells is known to be NP-complete~\cite{Iwamoto2022FiveCells}.

We construct a reduction via the flow model.
\cref{fig:fivecells_gadget} illustrates a unit gadget.
The configuration of the surrounding yellow region is uniquely determined by the boundary conditions, specifically where the gadget abuts the board border or adjacent gadgets.
The central cell of the gadget corresponds to a sink, from which four wires emanate in the four cardinal directions.
Conversely, the interface between two wires of adjacent unit gadgets corresponds to a source.
These gadgets correctly realize the model, as shown below:
The central cell requires a total of $4$ cells from the four wires to form a pentomino, while each wire supplies a specific number of cells to the central cell (\cref{fig:fivecells_wire}).
Specifically, a wire connected to the board border supplies exactly $1$ cell, effectively adjusting the remaining demand of the central cell to $3$ (when abutting one border) or $2$ (when abutting two borders).
For the interface between two wires, the sum of their supplies is constrained to $2$; collectively, these two connected wires function as a source.

\begin{figure}
\begin{minipage}[b]{0.48\linewidth}
  \centering
  \includegraphics[scale=\pdfscale]{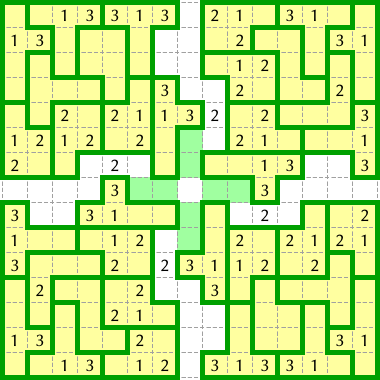}
  \captionof{figure}{
    Unit gadget for Five Cells.
    The undetermined portion is divided into the central cell and four wires.
    The green cells indicate the connections between the central cell and the wires.
  }
  \label{fig:fivecells_gadget}
\end{minipage}
\hfill
\begin{minipage}[b]{0.48\linewidth}
  \begin{subfigure}{0.48\linewidth}
    \centering
    \includegraphics[scale=\pdfscale]{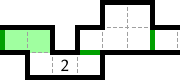}
    \caption{Flow value $0$}
    \label{fig:fivecells_wire_0}
  \end{subfigure}
  \hfill
  \begin{subfigure}{0.48\linewidth}
    \centering
    \includegraphics[scale=\pdfscale]{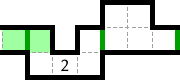}
    \caption{Flow value $1$}
    \label{fig:fivecells_wire_1}
  \end{subfigure}\vspace{1em}

  \begin{subfigure}{\linewidth}
    \centering
    \includegraphics[scale=\pdfscale]{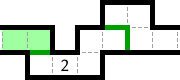}
    \caption{Flow value $2$}
    \label{fig:fivecells_wire_2}
  \end{subfigure}
  \captionof{figure}{
    A wire of the unit gadget for Five Cells and its three local solutions.
    The number of green cells that is connected to the central cell correspond to flow values.
    The end opposite to the central cell connects to the board border or such an end of another wire.
  }
  \label{fig:fivecells_wire}
\end{minipage}
\end{figure}

To simulate the variants of sources, we introduce two specific types of wires (\cref{fig:fivecells_wire_type}).
An \emph{incoming} wire (\cref{fig:fivecells_wire_incoming}) can supply either $1$ or $2$ cells, whereas an \emph{exclusive} wire (\cref{fig:fivecells_wire_exclusive}) is restricted to supplying either $0$ or $2$ cells.
A source with exactly one incoming wire corresponds to a biased source oriented toward the incoming side.
A source with two incoming wires corresponds to a fixed source, as the only pair of flow distributions summing to $2$ is $(1,1)$.
A source composed of exclusive wires corresponds to an exclusive source.

\begin{figure}
  \begin{subfigure}{0.48\textwidth}
    \centering
    \includegraphics[scale=\pdfscale]{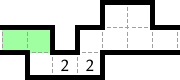}
    \caption{Incoming}
    \label{fig:fivecells_wire_incoming}
  \end{subfigure}\hfill
  \begin{subfigure}{0.48\textwidth}
    \centering
    \includegraphics[scale=\pdfscale]{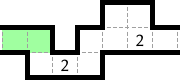}
    \caption{Exclusive}
    \label{fig:fivecells_wire_exclusive}
  \end{subfigure}
  \caption{
    Two specific types of wires for Five Cells, each restricting the local solutions and thereby the admissible flow distributions.
  }
  \label{fig:fivecells_wire_type}
\end{figure}

Because this reduction is parsimonious, we obtain the following result.

\begin{theorem}
  Five Cells is ASP-complete.
\end{theorem}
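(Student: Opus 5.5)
The claim to prove is that Five Cells is ASP-complete. Membership in NP (with the natural witness relation) is immediate, so the work is a parsimonious reduction from Restricted RCCP. I will go through the flow model of \cref{sec:flow_model}, using the restricted form of \cref{lem:zigzag}.

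\textbf{Step 1: build the flow network.} Given an instance $G$, apply \cref{lem:zigzag} to obtain a flow network $N$ that uses only T-shaped, L-shaped, and blank sinks. Each sink has a well-defined position on a rectangular grid, together with the types of its (up to four) incident sources. By \cref{lem:flow_model_equivalence}, cycle covers of $G$ are in bijection with flows in $N$. It therefore suffices to build a board whose solutions are in bijection with flows in $N$.

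\textbf{Step 2: tile the board.} Place one copy of the unit gadget of \cref{fig:fivecells_gadget} at each lattice point, so the central cell simulates the sink there. Two adjacent gadgets meet at an interface of two wires, and that interface plays the role of the source on the shared segment.

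\textbf{Step 3: the yellow regions.} First I would verify that the yellow regions are forced uniquely, both against neighbouring gadgets and against the board border. This should be a finite local check, since each yellow region is enclosed by clue cells.

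\textbf{Step 4: the sinks.} The central cell must lie in a pentomino made of itself and exactly four green cells drawn from its four wires. So the contributions must sum to $4$, which is the required sink demand. A border-abutting wire is forced to contribute $1$. This effectively gives a fixed source there and reduces the demand to the lattice degree in $H$.

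\textbf{Step 5: the sources.} For an interface of two ordinary wires, I would check from \cref{fig:fivecells_wire} that the two supplies sum to exactly $2$ and that each of $(0,2),(1,1),(2,0)$ has exactly one local tiling. I would then choose the wire types by source type:
\begin{itemize}
\item a biased source uses one incoming wire on the oriented side;
\item a fixed source uses two incoming wires;
\item an exclusive source uses exclusive wires.
\end{itemize}
In each case I would check that the admissible pairs are exactly those of the source type, again with one tiling each.

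\textbf{Step 6: composing the local checks.} Parsimony then follows from two facts. Wire states only interact at the central cell and at the interfaces. Every cell of the board belongs to some gadget component, so no unintended freedom remains. Each flow thus extends to exactly one partition, and each partition determines a flow.

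\textbf{Main obstacle.} The hardest part is the global one. The clue numbers count boundary edges, so a pentomino could in principle cross gadget boundaries, or a wire could realize a flow value by more than one tiling. Either would break the bijection. To handle this, I would first show that the yellow cells adjacent to each wire force the boundary edges there. This would confine every pentomino to a single wire plus possibly the central cell. After that, enumerating the local solutions of a single wire becomes a finite check.
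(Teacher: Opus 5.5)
Your proposal is correct and follows essentially the same route as the paper. The paper also uses one four-wire unit gadget per lattice point with the central cell as the sink, each border wire forced to supply $1$, abutting wire pairs summing to $2$ as sources, and incoming/exclusive wires realizing the biased, fixed, and exclusive variants, with the yellow cells forced by the boundary conditions. The one difference is that you pass through \cref{lem:zigzag}: this is harmless, because its network is a special case of the general construction and \cref{lem:flow_model_equivalence} still applies, but it is unnecessary, since the four-wire gadget already handles any sink of degree at most $4$ (the paper reserves the zigzag embedding for Four Cells).
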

 \subsection{Four Cells}

\emph{Four Cells}~\cite{FourCellsRule} is a puzzle first introduced in \textit{Puzzle Communication Nikoli} Vol.~132.
The goal is to partition a board into tetrominoes.
A number in a cell indicates how many edges of that cell belong to the tetromino boundary.

We construct a reduction based on \cref{lem:zigzag}.
The overall framework is analogous to the Five Cells case described in \cref{sec:five_cells}.
\cref{fig:fourcells_unit_gadget} illustrates the unit gadgets designed for T-shaped and L-shaped sinks.
Unlike the Five Cells case, these gadgets require a checkerboard arrangement of standard and inverted orientations to ensure consistent wire connections.
Regarding the flow constraints, each wire supplies a specific number of cells to the central green region (\cref{fig:fourcells_wire}), whose demand is set equal to the number of incident wires.
The remainder of the construction, including the wire variants (\cref{fig:fourcells_wire_variant}), follows the same principles as in the Five Cells case.

\begin{figure}[tb]
  \begin{subfigure}{0.32\linewidth}
    \centering
    \includegraphics[scale=\pdfscale]{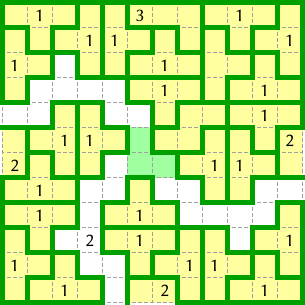}
    \caption{T-shaped}
  \end{subfigure}
  \hfill
  \begin{subfigure}{0.32\linewidth}
    \centering
    \includegraphics[scale=\pdfscale]{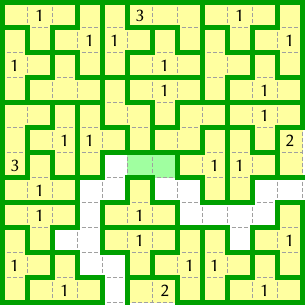}
    \caption{L-shaped}
  \end{subfigure}
  \hfill
  \begin{subfigure}{0.32\linewidth}
    \centering
    \includegraphics[scale=\pdfscale]{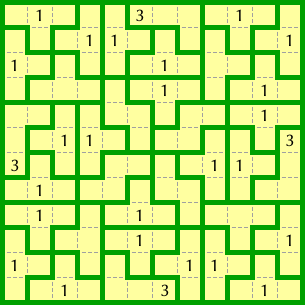}
    \caption{Fixed}
  \end{subfigure}
  \caption{
    Unit gadgets for Four Cells.
    The undetermined portion is divided into three (T-shaped) or two (L-shaped) wires.
    The green cells indicate the connections between the wires.
  }
  \label{fig:fourcells_unit_gadget}
\end{figure}

\begin{figure}[tb]
\begin{minipage}[b]{0.48\linewidth}
  \begin{subfigure}{0.48\linewidth}
    \centering
    \includegraphics[scale=\pdfscale]{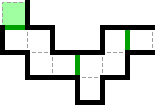}
    \caption{Flow value $0$}
  \end{subfigure}
  \hfill
  \begin{subfigure}{0.48\linewidth}
    \centering
    \includegraphics[scale=\pdfscale]{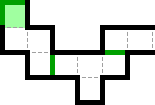}
    \caption{Flow value $1$}
  \end{subfigure}\vspace{1em}

  \begin{subfigure}{\linewidth}
    \centering
    \includegraphics[scale=\pdfscale]{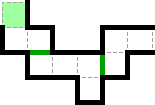}
    \caption{Flow value $2$}
  \end{subfigure}
  \captionof{figure}{
    Three local solutions of a wire.
    The number of cells in which the tetromino extends into the green cells corresponds to the flow value.
  }
  \label{fig:fourcells_wire}
\end{minipage}
\hfill
\begin{minipage}[b]{0.48\linewidth}
  \begin{subfigure}{0.48\linewidth}
    \centering
    \includegraphics[scale=\pdfscale]{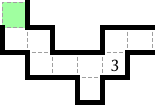}
    \caption{Incoming}
  \end{subfigure}
  \hfill
  \begin{subfigure}{0.48\linewidth}
    \centering
    \includegraphics[scale=\pdfscale]{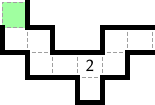}
    \caption{Exclusive}
  \end{subfigure}
  \captionof{figure}{
    Two specific types of wires for Four Cells, each restricting the local solutions and thereby the admissible flow distributions.
  }
  \label{fig:fourcells_wire_variant}
\end{minipage}
\end{figure}

Because this reduction is parsimonious, we obtain the following result.

\begin{theorem}
  Four Cells is ASP-complete.
\end{theorem}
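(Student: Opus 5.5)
Membership in NP is immediate: a partition of the board into tetrominoes can be checked cell by cell against the clues in polynomial time. For hardness we reduce parsimoniously from Restricted RCCP (\cref{thm:main}). We go through the restricted flow model of \cref{lem:zigzag}, which uses only T-shaped, L-shaped and blank sinks, together with the bijection of \cref{lem:flow_model_equivalence}; where convenient, the relaxed version in \cref{lem:relaxed_flow_model} may replace exact conservation. Given an instance $G$, we take the zigzag rectilinear embedding of \cref{lem:zigzag} and place one unit gadget of \cref{fig:fourcells_unit_gadget} at each lattice point. We use the T-shaped unit for vertices of $G$, the L-shaped unit for bend points of edges, and the fixed unit for blank points. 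The units alternate between standard and inverted orientation in a checkerboard pattern, so that wires of neighbouring units meet end to end.

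The proof then consists of three local verifications. First, the rigid part of each unit (everything except the wires and the central green cells) has a unique tetromino tiling forced by the clues. This uniqueness must hold both against the board border and against every neighbouring unit type and orientation, so the tiling also pins down the interface between adjacent units. Second, as in \cref{fig:fourcells_wire}, each wire admits exactly three local solutions, indexed by the number $0$, $1$ or $2$ of cells it contributes to the central green region. The green region of a sink can be completed into tetrominoes exactly when the total contribution equals its number of incident wires, which is the sink's demand. Third, at the interface of two wires from adjacent units, the two contributions must sum to $2$, so the pair behaves as an unconstrained source. Replacing a wire by its incoming or exclusive variant (\cref{fig:fourcells_wire_variant}) restricts the contribution to $\{1,2\}$ or $\{0,2\}$ respectively. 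As in the Five Cells argument, this yields biased, fixed and exclusive sources.

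Given these local facts, the tiling restricted to each wire determines a flow value, and flow conservation holds at every sink and every source. Conversely, each flow of $N$ extends to exactly one tiling, because every local configuration is unique once its flow values are fixed. Hence tilings correspond bijectively to flows, which by \cref{lem:flow_model_equivalence} correspond bijectively to cycle covers of $G$, and the reduction is parsimonious.

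The main obstacle is the first verification. We must show that no tetromino can straddle two units in an unintended way, which could create extra solutions or allow a wire to take a value outside $\{0,1,2\}$. The checkerboard orientation makes this harder than in Five Cells, because each interface type (standard--inverted, in each direction, for each of the T-shaped, L-shaped and fixed units) must be checked separately. We would handle this by a finite case analysis over these unit-pair interfaces, propagating the forced boundary edges given by the clues $0$--$4$ inward from the rigid region until only the wire configurations remain.
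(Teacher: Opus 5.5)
Your proposal follows the paper's own argument: a parsimonious reduction from Restricted RCCP through the restricted (zigzag) flow model. The ingredients match: checkerboard-oriented T-shaped, L-shaped and fixed unit gadgets; wires carrying flow values $0,1,2$ into a green region whose demand is the number of incident wires; wire pairs at unit interfaces acting as sources; and incoming/exclusive wire variants realizing the constrained sources as in Five Cells.

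One small bookkeeping point concerns your claim that the green-region demand ``is the sink's demand.'' The T- and L-shaped units carry only three and two wires, so a fixed-source direction has no wire and its constant contribution of $1$ is absorbed into the rigid part. The green-region demand is therefore the sink's degree in $H$ minus its number of fixed directions, not the full degree. This does not affect the bijection.
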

  \section{Conclusion and Future Work}\label{sec:conclusion}

In this paper, we established the ASP-completeness of RCCP (\cref{thm:main}) and CCP (\cref{cor:ccp}) for planar degree-$3$ bipartite mixed graphs.
This result is tight in the sense of the maximum degree, as the counting versions of these problems on max-degree-$2$ graphs are obviously polynomial-time solvable.
However, the complexity for degree-$k$ graphs with $k \ge 4$ remains open.
Moreover, while the input graph in \cref{cor:ccp} is constrained such that each vertex is incident to at least one directed edge,
it remains an open question whether the problem remains ASP-complete when restricted to exactly one directed edge per vertex.

Another interesting direction is to apply the flow model to puzzles defined on other grid topologies.
For instance, this model appears readily applicable to non-square grids, such as hexagonal or triangular grids.
It would be worthwhile to explore what types of sinks are sufficient to simulate RCCP on such grids.

\bibliography{reference.bib}

\end{document}